\newcommand{\softO}{\ensuremath{\widetilde{O}}\xspace}
\def\polylog{\operatorname{polylog}}
\newcommand{\grafopesato}{\ensuremath{G=(V,E,w)}\xspace}
\newcommand{\firstfpt}{\textsc{\clubfs--Fpt1}\xspace}
\newcommand{\secondfpt}{\textsc{\clubfs--Fpt2}\xspace}
\newcommand{\ptas}{\ensuremath{\mbox{\textsf{PTAS}}}\xspace}
\newcommand{\np}{\ensuremath{\mbox{\textsf{NP}}}\xspace}
\newcommand{\p}{\ensuremath{\mbox{\textsf{P}}}\xspace}
\newcommand{\cluspt}{\textsc{CluSPT}\xspace}
\newcommand{\clubfs}{\textsc{CluBFS}\xspace}
\newcommand{\clusp}{\textsc{CluSP}\xspace}
\newcommand{\tsat}{\ensuremath{3\mbox{\textsf{--CNF--SAT}}}\xspace}
\newcommand{\xc}{\ensuremath{\mbox{\textsf{Exact--Cover--by--3-Sets}}}\xspace}
\newcommand{\sat}{\ensuremath{\mbox{\textsf{CNF--SAT}}}\xspace}
\newcommand{\V}{\ensuremath{\mathcal{V}}\xspace}
\newcommand{\varset}{\ensuremath{H}\xspace}
\newcommand{\varsetprime}{\ensuremath{H'}\xspace}
\newcommand{\cluster}[1]{\ensuremath{V_{#1}}}
\newcommand{\diam}{\textsc{diam}}
\newcommand{\cost}{\textsc{cost}}
\newcommand{\cluvertex}[1]{\ensuremath{\nu_{#1}}}
\newcommand{\cluvertexopt}[1]{\ensuremath{\tau_{#1}}}
\newcommand{\approxsol}{\ensuremath{\tilde{T}}\xspace}
\newcommand{\clustertree}{\ensuremath{{T_C}}\xspace}
\newcommand{\optclustertree}{\ensuremath{{T_C}^*}\xspace}
\newcommand{\optsol}{\ensuremath{{T}^*}\xspace}
\newcommand{\optsolident}{\ensuremath{T''}\xspace}
\newcommand{\clustbfstree}{\ensuremath{T'}\xspace}
\let\tilde\widetilde
\newcommand{\opt}{\ensuremath{\mathtt{OPT}}\xspace}
\newcommand{\bfs}[1]{\ensuremath{\mathtt{BFS}_{#1}}\xspace}
\newcommand{\phigraph}{\ensuremath{G_{\phi}}\xspace}
\newcommand{\vphi}{\ensuremath{V(\phigraph)}\xspace}
\newcommand{\ephi}{\ensuremath{E(\phigraph)}\xspace}
\newcommand{\param}{\ensuremath{h}\xspace}
\newcommand{\I}{\mathcal{I}}
\renewcommand{\S}{\mathcal{S}}
\newcommand{\totalfpt}{\ensuremath{\softO\big(\min\big\{2^k k^3 n^4, \; \param^\frac{\param}{2} m\big\}\big)}\xspace}
\newcommand{\totalfptspt}{\ensuremath{\softO\big(\min\big\{nm + 2^k k^3 n^2 \cdot \opt \, \log \opt, \; \param^\frac{\param}{2} m\big\}\big)}\xspace}
\begin{document}

\title{Hardness, Approximability, and Fixed-Parameter Tractability of the Clustered Shortest-Path Tree Problem
\thanks{The results presented in this work have been announced in a preliminary form in \cite{DFFLP16}.}
}

\titlerunning{Clustered Shortest-Path Tree Problem}        

\author{Mattia~D'Emidio  \and
        Luca~Forlizzi  \and
        Daniele~Frigioni \and
        Stefano~Leucci \and
        Guido~Proietti
}


\institute{M. D'Emidio \at
              Gran Sasso Science Institute, L'Aquila, Italy. \\
              \email{mattia.demidio@gssi.it} .
           \and
           L. Forlizzi, D. Frigioni, G. Proietti \at
           Department of Information Engineering, Computer Science and Mathematics,\\
           University of L'Aquila, L'Aquila, Italy.\\
           \email{luca.forlizzi@univaq.it, daniele.frigioni@univaq.it, guido.proietti@univaq.it}.
           \and
           S. Leucci \at
           Department of Computer Science,  ETH Z\"urich, Switzerland. \\ \email{stefano.leucci@inf.ethz.ch}.
           \and
           G. Proietti \at
           Istituto di Analisi dei Sistemi e Informatica ``Antonio Ruberti''  Consiglio Nazionale delle Ricerche, Roma, Italy.
}

\date{Received: date / Accepted: date}

\maketitle
\begin{abstract}
Given an $n$-vertex non-negatively real-weighted graph $G$, whose vertices are
partitioned into a set of $k$ clusters, a \emph{clustered network design problem} on $G$ consists of solving a given network design optimization problem
on $G$, subject to some additional constraint on its clusters.
In particular, we focus on the classic problem of designing a \emph{single-source shortest-path tree}, and we analyze its computational hardness when in a feasible solution each cluster is required to form a subtree. We first study the \emph{unweighted} case, and prove that the problem is \np-hard. However, on the positive side, we show the existence of an approximation algorithm whose quality essentially depends on few parameters, but which remarkably is an $O(1)$-approximation when the largest out of all the \emph{diameters} of the clusters is either $O(1)$ or $\Theta(n)$. Furthermore, we also show that the problem is \emph{fixed-parameter tractable} with respect to $k$ or to the number of vertices that belong to clusters of size at least 2. Then, we focus on the \emph{weighted} case, and show that the problem can be approximated within a tight factor of $O(n)$, and that it is fixed-parameter tractable as well. Finally, we analyze the unweighted \emph{single-pair shortest path problem}, and we show it is hard to approximate within a (tight) factor of $n^{1-\epsilon}$, for any $\epsilon>0$.

\keywords{Clustered Shortest-Path Tree Problem \and Hardness \and Approximation Algorithms \and Fixed-Parameter Tractablity \and Network Design}
\end{abstract}

\section{Introduction} \label{sec:intro}
In several modern network applications, the underlying set of nodes may be partitioned
into \emph{clusters}, with the intent of modeling some aggregation phenomena
taking place among similar entities in the network. In particular, this happens in communication and social networks, where clusters may refer
to local-area subnetworks and to communities of individuals, respectively. While
on one hand the provision of clusters allows to represent the complexity of
reality, on the other hand it may ask for introducing some additional
constraints on a feasible solution to a given network design problem, with the
goal of preserving a specific cluster-based property. Thus, on a theoretical
side, given a (possibly weighted) graph $G$,
whose vertex set is partitioned into $k$ pairwise disjoint subset (i.e., clusters),
a \emph{clustered} (a.k.a. \emph{generalized}) \emph{network design problem} on $G$ consists of finding a (possibly
optimal) solution to a given network design problem on $G$, subject to some additional constraint on its clusters.
Depending on such constraint, the computational
complexity of the resulting problem may change drastically as compared to the
unconstrained version. Therefore, this class of problems deserves a theoretical
investigation that, quite surprisingly, seems to be rather missing up to now.

One of the most intuitive constraints one could imagine is that of maintaining
some sort of \emph{proximity} relationship among nodes in a same cluster. This
scenario has immediate practical motivations: for instance, in a communication
network, this can be convincingly justified with the requirement of designing a
network on a classic two-layer (i.e., local \emph{versus} global layer)
topology. In particular, if the foreseen solution has to be a (spanning)
tree $T$ in $G$, then a natural requirement is that each cluster should
induce a (connected) subtree of $T$. For the sake of simplicity, in the
following this will be referred to as a \emph{clustered tree design problem}
(CTDP), even if this is a slight abuse of terminology.
Correspondingly, classic spanning-tree optimization problems on graphs can be reconsidered under
this new perspective, aiming at verifying whether they exhibit a significant
deviation (from a computational point of view) w.r.t. the ordinary (i.e.,
non-clustered) counterpart.
In particular, we will focus on the clustered
version of the problem of computing a \emph{single-source
shortest-path tree} (SPT) of $G$, i.e., a spanning tree of $G$ rooted at a given
source node, say $s$, minimizing the total length of all the paths emanating
from $s$. It is worth noticing that an SPT, besides its theoretical relevance,
has countless applications, and in particular it supports a set of primitives of
primary importance in communication networks, as for instance the
\emph{broadcast protocol} and the \emph{spanning tree protocol}.

\subsection{Contribution of the paper}
Let \grafopesato be a connected and undirected graph of $n$ vertices and $m$ edges, where each edge $(u,v) \in E$ is associated with a non-negative real weight $w(u,v)$. For a subgraph $H$ of $G$, we will use $V(H)$ ($E(H)$, resp.)  to denote the set of vertices (edges, resp.) of $H$, and $H[S]$  to denote the subgraph of $H$ induced by $S$, $S\subseteq V(H)$. Moreover, $\pi_H(u,v)$ will denote a \emph{shortest path} between vertices $u$ and $v$ in $H$, while $d_H(u,v)$ will denote the corresponding \textit{distance} between $u$ and $v$ in $H$, i.e., the sum of the weights of the edges in $\pi_H(u,v)$.

Formally, the clustered version of the SPT problem (\cluspt in the sequel), is defined as follows. We are given a graph $G$ defined as above, along with a partition of $V$ into a set of $k$ (pairwise disjoint) clusters $\V = \{\cluster{1},\cluster{2}, \ldots, \cluster{k}\}$, and a distinguished source vertex $s\in V$. The objective is to find a \textit{clustered SPT} of $G$ rooted at $s$, i.e., a spanning tree $T$ of $G$ such that $T[\cluster{i}],~i=1, \ldots, k$, is a connected component (i.e., a subtree) of $T$, and for which the \emph{broadcast cost} from $s$, i.e. $\cost(T) = \sum_{v \in V} d_{T}(s,v)$, is minimum.

The SPT problem in a non-clustered setting has been widely studied, and in its
more general definition it can be solved in $O(m + n \log n)$ time by means of
the classic Dijkstra's algorithm. More efficient solutions are known for
special classes of graphs (e.g., euclidean, planar, directed acyclic graphs,
etc.), or for restricted edge weights instances. In particular, if $w$ is
uniform, namely $G$ is \textit{unweighted}, then an optimal solution can be
found in $O(m+n)$ time by means of a simple \emph{breadth-first search} (BFS)
visit of $G$. Nevertheless, to the best of our knowledge nothing is known about
its clustered variant, despite the fact that, as we argued above, it is very
reasonable to imagine a scenario where an efficient broadcast needs to be
applied locally and hierarchically within each cluster.
%

Here, we then try to fill this gap, and we show that \cluspt, and its unweighted
version, say \clubfs, are actually much harder than their standard counterparts,
namely:

\begin{enumerate}
\item \clubfs is \np-hard, but  it admits an $O(\min\{\frac{4n k}{\gamma},\frac{4n^2}{\gamma^2},2\gamma\})$ approximation algorithm, where $\gamma$ denotes the length of the largest out of all the \emph{diameters} of the clusters. Interestingly, the approximation ratio becomes $O(1)$ when $\gamma$ is either $O(1)$ or $\Theta(n)$, which may cover cases of practical interest. However, we also point out that in the worst case, namely for $\gamma=\Theta(n^{\frac{2}{3}})$ and $k=\Theta(\sqrt[3]{n})$, the algorithm becomes $O(n^{\frac{2}{3}})$-approximating. Besides that, we also show that the problem is \emph{fixed-parameter tractable}, as we can provide a \totalfpt time exact algorithm,\footnote{Throughout the paper, the notation \softO suppresses
factors that are polylogarithmic in $n$.} where \param is the total number of vertices that belong to clusters of size at least two.

\item \cluspt is hard to approximate within a factor of $n^{1-\epsilon}$ for any constant $\epsilon \in (0, 1]$, unless $\p=\np$, but, on the positive side: (i) it admits an $n$-approximation, thus essentially tight, algorithm; (ii) similarly to the unweighted case, it is fixed-parameter tractable as well.
\end{enumerate}
Finally, we study the \emph{clustered single-pair shortest path problem} (say \clusp in the sequel)
on unweighted graphs, i.e., the problem of finding a shortest path between a
given pair of vertices of $G$, subject to the constraint that the vertices from
a same cluster that belong to the path must appear consecutively. Notice that in this variant, not \emph{all} the vertices of a cluster must belong to a solution, and not \emph{all} the clusters must enter
into a solution. We show that it cannot be approximated in polynomial time
within a factor of $n^{1-\epsilon}$, for any constant $\epsilon > 0$, unless
$\p=\np$. This extends the inapproximability result (within any polynomial factor) that was given in \cite{LW16} for the corresponding weighted version. Since obtaining
an $n$-approximation is trivial, the provided inapproximability result is a bit
surprising, as one could have expected the existence of a $o(n)$-approximation
algorithm, similarly to what happened for \clubfs.

\subsection{Related Work}
Several classic tree/path-design problems have been investigated in the CTDP framework. Some of them, due to their nature, do not actually exhibit a significant deviation (from a computational point of view) w.r.t. the ordinary (i.e., non-clustered) counterpart. For instance, the \emph{minimum spanning tree} (MST) problem falls in this category, since we can easily solve its clustered version by first computing a MST of each cluster, then contracting these MSTs each to a vertex, and finally finding a MST of the resulting graph. This favourable behaviour is an exception, however, as the next cases show.

A well-known clustered variant of the \emph{traveling salesperson problem}
is that in which one has to find a minimum-cost Hamiltonian cycle of $G$ (where
$G$ is assumed to be complete, and $w$ is assumed to be a metric on $G$) such
that \emph{all} the vertices of each cluster are visited consecutively. For this
problem, Bao and Liu give in \cite{BL12} a 13/6-approximation algorithm, thus
improving a previous approximation ratio of $2.75$ due to Guttmann-Beck \emph{et
al.}  \cite{GHKR00}. As a comparison, recall that the best old-standing
approximation ratio for the unclustered version of the problem is equal to $3/2$
(i.e., the celebrated Christofides algorithm).

Another prominent clustered variant is that concerned with the classic
\emph{minimum Steiner tree problem}. In this case, one has to find a tree
of minimum cost spanning a subset $R \subseteq V$ of \emph{terminal} vertices,
under the assumption that nodes in $R$ are partitioned into a set of clusters,
say $\{R_1,R_2, \ldots, R_k\}$, and with the additional constraint that, in a
feasible solution $T$, we have that, for every $i=1,2,\dots,k$, the minimal
subtree of $T$ spanning the vertices of $R_i$ does not contain any terminal
vertex outside $R_i$. For this problem, again restricted to the case in which
$G$ is complete and $w$ is a metric on $G$, in \cite{WL15} the authors present a
$(2+\rho)$-approximation algorithm, where $\rho \simeq 1.39$ is the best known
approximation ratio for the minimum Steiner tree problem \cite{BGRS13}.

Further, we mention the clustered variant of the \emph{minimum routing-cost
spanning tree problem}. While in the non-clustered version one has to find a
spanning tree of $G$ minimizing the sum of all-to-all tree distances, and the
problem is known to admit a \ptas \cite{WLBCRT98}, in \cite{LW16} the authors
analyze the clustered version, and show that on general graphs the problem is
hard to approximate within any polynomial factor, while if $G$ is complete and
$w$ is a metric on $G$, then the problem admits a factor-2 approximation.
Interestingly, along the way the authors present an inapproximability result for
\clusp (on weighted graphs), which was in fact the inspiration for the present study.

Finally, we refer the reader to the paper by
Feremans \emph{et al.} \cite{FLL03}, where the authors review several classic
network design problems in a clustered perspective, but with different side
constraints on the clusters (i.e., expressed in terms of number of
representatives for each cluster that has to belong to a feasible solution). A
notable example of this type is the so-called \emph{group Steiner tree problem},
where it is required that \emph{at least} one terminal vertex from each cluster
$R_i$ must be included in a feasible solution. This problem is known to be
approximable within $O(\log^{3}n)$ \cite{GKR00}, and not approximable within
$\Omega(\log^{2-\epsilon}n)$, for any $\epsilon>0$, unless \np admits
quasipolynomial-time Las Vegas algorithms \cite{HK03}.

\subsection{Structure of the Paper.}
For the sake of clarity, we first present, in Section~\ref{sec:cluBFS}, our
results on \clubfs. Then, in Section~\ref{sec:cluSPT}, we give our results
on \cluspt, while in Section \ref{sec:cluSP} we
discuss our results on unweighted \clusp. Finally, in Section~\ref{sec:concl}
we conclude the paper and outline possible future research directions.

\section{\clubfs} \label{sec:cluBFS}
In this section, we present our results on the \clubfs problem. In particular,
we first prove that it is \np-hard, then we show that it can be approximated
within an $O(n^{\frac{2}{3}})$-factor in polynomial time, by providing a
suitable approximation algorithm, and finally we show it is fixed-parameter tractable.
We start by proving the following result:

\begin{theorem}
\label{th:clubfshard}
\clubfs is \np-hard.
\end{theorem}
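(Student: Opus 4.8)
The plan is to establish \np-hardness through a polynomial-time reduction from \tsat, exploiting the fact that the subtree constraint forces each cluster to hang from the rest of the tree through a single ``entry'' (root) vertex, and that the choice of this root is essentially the only degree of freedom a small cluster has.

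Given a formula with variables $x_1,\dots,x_n$ and clauses $c_1,\dots,c_m$, I would build an unweighted, connected, undirected graph $G$ as follows. I introduce the source $s$ as a singleton cluster. For each variable $x_i$ I create a two-vertex cluster $\cluster{i}=\{t_i,f_i\}$ joined by the edge $(t_i,f_i)$, and I make both $t_i$ and $f_i$ adjacent to $s$. For each clause $c_j$ I create a further singleton cluster consisting of one vertex $w_j$, which I join to the literal vertices of its three literals (to $t_i$ if $x_i$ occurs positively in $c_j$, to $f_i$ if it occurs negatively), adding no other edge incident to $w_j$. The construction has $1+2n+m$ vertices and is clearly polynomial.

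The structural core of the argument is that feasibility forces the edge $(t_i,f_i)$ into any solution $T$ (it is the only intra-cluster edge of $\cluster{i}$), so $\cluster{i}$ attaches to $s$ through exactly one of $t_i,f_i$: rooting it at $t_i$ places $t_i$ at distance $1$ and $f_i$ at distance $2$, and symmetrically, so the cluster contributes a \emph{fixed} cost of $3$ irrespective of the choice. I then read off the truth assignment from this choice, declaring $x_i$ true iff $\cluster{i}$ is rooted at $t_i$. Since $w_j$ is not adjacent to $s$, we have $d_T(s,w_j)\ge 2$, with equality iff one of its three literal neighbours sits at distance $1$, i.e. iff the corresponding literal is true. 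Hence $\cost(T)=3n+\sum_j d_T(s,w_j)$ with every summand in $\{2,3\}$, and I set the threshold to $3n+2m$. For the \emph{completeness} direction, a satisfying assignment immediately yields a tree of cost exactly $3n+2m$ by rooting each variable cluster according to its truth value and hanging each $w_j$ from one of its true literals.

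The step I expect to require the most care is \emph{soundness}: showing that every spanning tree of cost at most $3n+2m$ induces a satisfying assignment, with no cheating configuration undercutting the intended semantics. Here the reduction is designed so that the per-gadget costs are independent additive lower bounds whose sum equals the threshold: a variable cluster not rooted directly at $s$ already costs at least $5$ (its root would lie at distance $\ge 2$ and its second vertex at distance $\ge 3$), and a clause vertex not adjacent to any distance-$1$ literal vertex costs at least $3$. Thus a tree meeting the budget must realise every lower bound simultaneously with equality, which pins each variable cluster to a root at $s$ and forces each $w_j$ onto a true literal, exactly encoding a satisfying assignment. The one subtlety to verify carefully is that the restricted adjacency of $w_j$ (to precisely its three literal vertices) leaves no alternative distance-$2$ route to $s$; this is guaranteed by the fact that no further edges incident to clause vertices are added, so every gadget's cost is genuinely decoupled from the others.
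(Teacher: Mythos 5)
Your reduction is correct and follows essentially the same approach as the paper's: a polynomial reduction from \tsat with two-vertex variable clusters whose forced intra-cluster edge makes the choice of cluster root encode a truth value, and clause gadgets that can sit at distance $2$ from $s$ if and only if one of their literals is true, so that a satisfying assignment is equivalent to every per-gadget lower bound being met with equality. The only difference is that you use a single singleton clause vertex where the paper uses a three-vertex triangle cluster (giving you threshold $3n+2m$ instead of $3\eta+8\mu$); both variants work, and your soundness argument correctly rules out the cheating configurations.
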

\begin{proof}
In order to prove the statement we provide a polynomial-time reduction from the \tsat
problem, which is known to be \np-complete, to \clubfs.

The \tsat problem is a variant of the classic \sat
problem. \sat is the problem of determining whether it is
satisfiable a given \emph{boolean CNF formula}, i.e., a conjunction of
\textit{clauses}, where a clause is a disjunction of \textit{literals}, and a
literal represents either a \textit{variable} or its negation. In the \tsat
version, the number of literals in each clause is constrained to be exactly
three.

The proof proceeds as follows: starting from a \tsat instance $\phi$ with $\eta$
variables, say $x_1,\dots,x_\eta$, and $\mu$ clauses, say $c_1,\dots,c_{\mu}$,
we first construct an instance $\langle \phigraph,\V,s\rangle$ of the \clubfs
problem which consists of: (i) a graph \phigraph; (ii) a clustering $\V$ of the
vertices of \phigraph; (iii) a distinguished source vertex $s$ of \phigraph.
We then show that instance $\langle \phigraph,\V,s\rangle$ exhibits the two
following properties:
(i) if $\phi$ is satisfiable then $\opt \le 3\eta + 8\mu $;
(ii) if $\phi$
is not satisfiable then $\opt \geq 3\eta + 8\mu + 3$,
where $\opt$ denotes the
cost of the optimal solution to the \clubfs problem on $\langle
\phigraph,\V,s\rangle$.
By proving the above, we will show that finding an optimal solution to the
\clubfs problem is at least as hard as solving \tsat.

The graph \phigraph corresponding to the formula $\phi$ can
be obtained from an empty graph by proceeding as follows.
First, we add to \vphi a single source vertex $s$ and for each variable $x_i$, we add: (i) two \emph{variable
vertices} $v_i$ and $\overline{v}_i$ to \vphi; (ii) three edges, namely
$(v_i,s)$, $(\overline{v}_i,s)$ and $(v_i,\overline{v}_i)$, to \ephi. Then, for each clause $c_j$ we add: (i) three \emph{clause vertices},
$c_{j,1}$, $c_{j,2}$, $c_{j,3}$, one for each of the three literals of $c_j$, to
\vphi; (ii) three edges $(c_{j,1},c_{j,2})$, $(c_{j,2},c_{j,3})$,
$(c_{j,3},c_{j,1})$ to \ephi. Finally, for each clause $c_j$, and for $k=1,2,3$, let $x_i$ be the variable associated with the $k$-th literal $\ell$ of $c_j$. If the literal is negative, i.e., $\ell=\overline{x_i}$, we add edge $(c_{j,k},\overline{v}_i)$ to \ephi, otherwise (i.e., $\ell=x_i$) we add $(c_{j,k},{v_i})$ to \ephi.

It is easy to see that \phigraph has $|\vphi|=3\mu+2\eta+1$  vertices and
$|\ephi|=6\mu+3\eta$ edges.
A clarifying example on how to build \phigraph for a generic \tsat formula $\phi$ having three clauses is shown in Fig.~\ref{fig:reduction_bfs}. Notice that, the first literal of clause $c_1$ is positive and
associated with variable $x_1$. Therefore, clause vertex $c_{1,1}$ is connected to variable vertex $v_1$ in \phigraph.

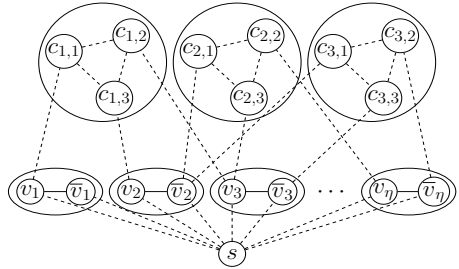
\begin{figure}[ht]
\centering
\resizebox{.5\textwidth}{3.5cm}{
\begin{tikzpicture}[font=\huge,semithick,transform shape,minimum size=8mm,inner sep=0pt]
\node[draw, circle] (s) at (6,0) {$s$};
\node[draw, circle] (1) at (0,2) {$v_1$};
\node[draw, circle] (1n) at (1.5,2) {$\overline{v}_1$};
\node[draw, circle] (2) at (3,2) {$v_2$};
\node[draw, circle] (2n) at (4.5,2) {$\overline{v}_2$};
\node[draw, circle] (3) at (6,2) {$v_3$};
\node[draw, circle] (3n) at (7.5,2) {$\overline{v}_3$};

\node[circle] (dots) at (9,2) {$\cdots$};
\node[draw, circle] (n) at (10.5,2) {$v_\eta$};
\node[draw, circle] (nn) at (12,2) {$\overline{v}_\eta$};

\node[draw,circle] (c11) at (1,6.5) {$c_{1,1}$};
\node[draw,circle] (c12) at (3,7) {$c_{1,2}$};
\node[draw,circle] (c13) at (2.5,5) {$c_{1,3}$};

\node[draw,circle] (c21) at (5,6.5) {$c_{2,1}$};
\node[draw,circle] (c22) at (7,7) {$c_{2,2}$};
\node[draw,circle] (c23) at (6.5,5) {$c_{2,3}$};

\node[draw,circle] (c31) at (9,6.5) {$c_{3,1}$};
\node[draw,circle] (c32) at (11,7) {$c_{3,2}$};
\node[draw,circle] (c33) at (10.5,5) {$c_{3,3}$};
\draw (0.75,2) ellipse (1.4cm and .75cm);
\draw (3.75,2) ellipse (1.4cm and .75cm);
\draw (6.75,2) ellipse (1.4cm and .75cm);
\draw (11.25,2) ellipse (1.4cm and .75cm);
\draw (2.15,6.15) ellipse (1.9cm and 1.9cm);
\draw (6.15,6.15) ellipse (1.9cm and 1.9cm);
\draw (10.15,6.15) ellipse (1.9cm and 1.9cm);
\draw[dashed]   (s) edge (1);
\draw[dashed]   (s) edge (1n);
\draw[dashed]   (s) edge (2);
\draw[dashed]   (s) edge (2n);
\draw[dashed]   (s) edge (3);
\draw[dashed]   (s) edge (3n);
\draw[dashed]   (s) edge (n);
\draw[dashed]   (s) edge (nn);
\draw[dashed]   (c11) edge (c12);
\draw[dashed]   (c12) edge (c13);
\draw[dashed]   (c11) edge (c13);
\draw[dashed]   (c21) edge (c22);
\draw[dashed]   (c22) edge (c23);
\draw[dashed]   (c21) edge (c23);
\draw[dashed]   (c31) edge (c32);
\draw[dashed]   (c32) edge (c33);
\draw[dashed]   (c31) edge (c33);
\draw   (1) edge (1n);
\draw   (2) edge (2n);
\draw   (3) edge (3n);
\draw   (n) edge (nn);
\draw[dashed]   (1) edge (c11);
\draw[dashed]   (2) edge (c13);
\draw[dashed]   (3) edge (c12);
\draw[dashed]   (2n) edge (c21);
\draw[dashed]   (3) edge (c23);
\draw[dashed]   (n) edge (c22);
\draw[dashed]   (2n) edge (c31);
\draw[dashed]   (3n) edge (c33);
\draw[dashed]   (nn) edge (c32);
\end{tikzpicture}}
\caption{Graphical representation of the reduction from \tsat to \clubfs used in the proof of Theorem~\ref{th:clubfshard}.}
\label{fig:reduction_bfs}%
\end{figure}

Now, the final step of the construction consists in defining a clustering
$\V$ over the vertices of \phigraph.
In details, we define
$\V=\{\cluster{s},\cluster{1},\dots,\cluster{\mu},\cluster{\mu+1},
\cluster{\mu+\eta}\}$ as follows. The source vertex is a singleton, i.e.,
$\cluster{s}$ contains $s$ only.
Then, for
each clause $c_j$, with $j=1,\dots, \mu$, we set $\cluster{j} = \{
c_{j,1}, c_{j,2}, c_{j,3} \}$.
Finally, for each variable $x_i$ we set $\cluster{\mu+i}= \{ \overline{v}_i,
v_i \}$.

We now proceed with the last part of the proof. In particular, if $\phi$ is
satisfiable, we consider a satisfying assignment and we construct a solution
$T$ to \clubfs on instance $\langle \phigraph,\V,s\rangle$ as follows:
i) for each variable $x_i$, if $x_i$ is true we add the edges $(s,v_i)$ and $(v_i,
\overline{v}_i)$ to $T$, otherwise we add the edges
$(s,\overline{v}_i)$ and $(v_i, \overline{v}_i)$ to $T$;
ii) for each clause  $c_j$, choose $k \in \{1,2,3\}$ so that the $k$-th
literal of $c_j$ is true, and let $v_i$ be the unique variable vertex that is a
neighbor of $c_{j,k}$ in $G_\phi$. We add the edges in $\{(c_{j,k}, v_i) \} \cup
\{ (c_{j,k}, c_{j,k'}) \, : \, k' \in \{1,2,3\} \wedge k'\neq k\}$ to
$T$.

It is easy  to check that exactly one of each pair of vertices $v_i$ and
$\overline{v}_i$ is at distance $1$ from $s$ in $T$ while the other
is at distance $2$. Moreover, for each clause $c_j$ exactly one of the
vertices in $\{ c_{j,1}, c_{j,2}, c_{j,3} \}$ is at distance $2$ from $s$ in
$T$, while the other two are at distance $3$. Hence $\opt \le 3\eta +
8\mu$.
Suppose now that $\phi$ is not satisfiable and let $T$ be a solution to
\clubfs. It is easy to see that, for each variable $x_i$, solution
$T$ must include the
edge $(v_i,\overline{v}_i)$ since the graph induced by the associated cluster
must be connected. This means that at least one of $v_i$ and $\overline{v}_i$
must be at distance $2$ from $s$ in $T$. Similarly, since for every
$j=1,\dots,\mu$
the subgraph of $T$ induced by the vertices in $\{c_{j,1}, c_{j,2},
c_{j,3}\}$
must be connected, we have that one of them, say w.l.o.g.\ $c_{j,1}$, must be at
a distance at least $d_{T}(s,c_{j,1}) \ge
d_{\phigraph}(s,c_{j,1})=2$ from $s$ while the
other two must be at a distance of at least $d_{T}(s,c_{j,1})+1$ in
$T$. Moreover, since
$\phi$ is not satisfiable, there is at least one clause $c_j$ with $j \in \{1,\dots,\mu\}$ such
that the closest vertex of $c_{j,1}$, $c_{j,2}$, $c_{j,3}$ is at distance at least
$3$ from $s$ in $T$. Indeed, if that were not the case, this would
imply that,
for each clause $c_j$, there would exist a vertex $c_{j,k}$, for a certain
$k$, at distance $2$ from $s$
in $T$, and hence the set of vertices at distance $1$ from $s$ would
induce a
satisfying truth assignment for $\phi$. It follows that:
\begin{align*}
 \cost(T) & \ge \eta + 2\eta + (\mu-1)2 + 2(\mu-1)3 + 3+2\cdot 4\\
 &= 3\eta + 8(\mu-1) + 11 \\
 & = 3\eta + 8\mu +3.
\end{align*}
Since the latter bound holds for any solution to \clubfs, we have $\opt \ge 3\eta + 8\mu +3$, which concludes the proof. \qed
\end{proof}

\subsection{An approximation algorithm}
In what follows, we present an approximation algorithm for
\clubfs (see Algorithm~\ref{alg:clustered_bfs_apx}). The main idea of the
algorithm is that of minimizing the number of distinct clusters that must be
traversed by any path connecting the source $s$ to a vertex $v \in V$. Recall that the
\emph{diameter} $\diam(G)$ of a graph $G$ is the length of a longest
shortest path in $G$. Then, it is possible to show that: (i) if all the clusters
are of low diameter, then this leads to a good approximation for \clubfs, and,
on the other hand (ii) if at least one cluster has large diameter, then the
optimal solution must be expensive and hence any solutions for \clubfs will
provide the sought approximation.

Given an instance $\langle G,\V,s\rangle$ of
\clubfs, w.l.o.g.  let us assume that \cluster{1} is the cluster containing vertex $s$, and
that $G[\cluster{i}]$ is connected for each $i=1,\dots,k$, as otherwise the problem trivially admits no feasible solution.
Our approximation algorithm first
considers each cluster $\cluster{i} \in \V$ and identifies all the vertices
belonging to
$\cluster{i}$ into a single \emph{cluster-vertex} $\cluvertex{i}$ to obtain a
graph $G'$ in which: (i)
each vertex corresponds to a cluster; (ii) there is an edge $(\cluvertex{i},
\cluvertex{j})$ between two vertices in $G'$ if and only if the set $E_{i,j} =
\{(v_i, v_j) \in E : v_i \in \cluster{i}\, \wedge\, v_j \in \cluster{j}\,
\wedge\, i\neq j\}$ is not
empty.
The algorithm proceeds then by computing a BFS tree \clustbfstree of $G'$ rooted
at $\cluvertex{1}$ and constructs the sought approximate solution \approxsol as
follows: initially, \approxsol contains all the vertices of $G$ and the edges
of a BFS
tree of $G[\cluster{1}]$ rooted at $s$. Then, for each edge $(\cluvertex{i},
\cluvertex{j})$ of \clustbfstree, where $\cluvertex{i}$ is the parent of
$\cluvertex{j}$ in \clustbfstree, it adds to \approxsol a single edge $(v_i,r_j)
\in E_{i,j}$ along with all the edges of a BFS tree $T_j$ of $G[\cluster{j}]$
rooted at $r_j$.

\begin{algorithm}[th]
\DontPrintSemicolon
\SetKwInOut{Input}{Input}
\SetKwInOut{Output}{Output}
\Input{An instance $\langle G,\V,s\rangle$ of \clubfs}
\Output{An approximated clustered BFS tree
$\approxsol$}
Let $s \in \cluster{1}$ w.l.o.g. \;
$G' \gets $ Copy $G$ and identify the vertices belonging to  $\cluster{i}$ into a
single cluster-vertex \cluvertex{i}\;
Let $E_{i,j} = \{(v_i, v_j)
\in E : v_i \in \cluster{i} \wedge v_j \in \cluster{j} \}$\;
$\clustbfstree \gets$
Compute a BFS tree of $G'$ rooted at \cluvertex{1}\;
$T_1 \gets $ Compute a BFS tree of $G[\cluster{1}]$ rooted at s\;
$\approxsol \gets (V, E({T_1}) )$ \;
\For{$j=2,\dots,k$}{
	$(p_i, r_j) \gets $ any edge in $E_{i,j}$ where $\cluvertex{i}$ is the
parent of $\cluvertex{j}$ in \clustbfstree \label{alg:clustered_bfs_apx:liner}\;
	$T_j \gets $ Compute a BFS tree of $G[\cluster{j}]$ rooted at $r_j$\;
	$E({\approxsol}) \gets E({\approxsol}) \cup \{ (p_i, r_j) \} \cup
E({T_j})\}$\;
}
\Return \approxsol
\caption{Approximation algorithm for \clubfs.}
\label{alg:clustered_bfs_apx}
\end{algorithm}

We now show that Algorithm~\ref{alg:clustered_bfs_apx} outputs a feasible
solution for the \clubfs problem which is far from the optimum by at most a factor of $\min\{\frac{4n k}{\gamma},\frac{4n^2}{\gamma^2},2\gamma\}$, where $\gamma = \max_{\cluster{i} \in \V}{\diam(G[\cluster{i}])}$.
In particular, given an instance $\langle
G,\V,s\rangle$ of \clubfs, let \optsol be an optimal clustered BFS tree.
To prove the approximation ratio, we will make use of the following lemma.

\begin{lemma}
\label{lemma:clustered_bfs_ub_cost}
$\cost(\approxsol) \le 2 \gamma \, \cost(\optsol)$.
\end{lemma}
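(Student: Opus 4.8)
The plan is to compare $\cost(\approxsol)$ and $\cost(\optsol)$ by exploiting the fact that, in any feasible clustered tree, each cluster \cluster{j} induces a connected subtree, which therefore has a unique vertex closest to $s$ --- its \emph{entry point} --- through which every $s$-to-$v$ path with $v\in\cluster{j}$ must pass. For $v\in\cluster{j}$ I would write $d_{\optsol}(s,v)=d_{\optsol}(s,\rho_j)+\delta^*_j(v)$, where $\rho_j$ is the entry point of \cluster{j} in \optsol and $\delta^*_j(v)$ is the depth of $v$ in the subtree $\optsol[\cluster{j}]$; and analogously $d_{\approxsol}(s,v)=d_{\approxsol}(s,r_j)+d_{T_j}(r_j,v)$, where $r_j$ is the entry point chosen by the algorithm. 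Summing over all vertices, both costs split into an \emph{inter-cluster} term $\sum_j |\cluster{j}|\cdot d(s,\text{entry})$ and an \emph{intra-cluster} term $\sum_j\sum_{v\in\cluster{j}}(\text{depth in cluster})$, and I would bound the two \approxsol-terms against the corresponding \optsol-terms by a factor $2\gamma$ separately.

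For the inter-cluster term, let $L_j$ denote the BFS level of \cluvertex{j} in \clustbfstree, i.e.\ the distance $d_{G'}(\cluvertex{1},\cluvertex{j})$. A short induction along \clustbfstree shows $d_{\approxsol}(s,r_j)\le(\gamma+1)L_j$, since passing from a cluster to its child costs one cross edge plus at most $\gamma=\max_i\diam(G[\cluster{i}])$ to traverse the parent cluster from its own entry point to the endpoint of the cross edge. For the optimum, the key structural fact is $d_{\optsol}(s,\rho_j)\ge L_j$: because clusters are convex in the tree \optsol, the $s$-to-$\rho_j$ path meets each cluster in a contiguous stretch, so the distinct clusters it visits form a path in $G'$ from \cluvertex{1} to \cluvertex{j}, whence the path uses at least $L_j$ cross edges. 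Using $\gamma\ge 1$ (otherwise every cluster is a singleton and \clubfs is solved exactly by plain BFS), this gives $d_{\approxsol}(s,r_j)\le(\gamma+1)L_j\le 2\gamma\,L_j\le 2\gamma\,d_{\optsol}(s,\rho_j)$, and multiplying by $|\cluster{j}|$ bounds the inter-cluster terms cluster by cluster.

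For the intra-cluster term I would observe that $T_j$ is a BFS tree of $G[\cluster{j}]$, so every vertex sits at depth at most $\diam(G[\cluster{j}])\le\gamma$, giving $\sum_{v\in\cluster{j}}d_{T_j}(r_j,v)\le(|\cluster{j}|-1)\gamma$; on the other hand each of the $|\cluster{j}|-1$ non-entry vertices has depth at least $1$ in $\optsol[\cluster{j}]$, so $\sum_{v\in\cluster{j}}\delta^*_j(v)\ge|\cluster{j}|-1$, and the ratio of the two quantities is at most $2\gamma$. Adding the inter- and intra-cluster bounds over all clusters then yields $\cost(\approxsol)\le 2\gamma\,\cost(\optsol)$.

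The main obstacle, and the reason I would avoid a direct per-vertex comparison $d_{\approxsol}(s,v)\le 2\gamma\,d_{\optsol}(s,v)$, is that such a bound is \emph{false}: a vertex lying at the far end of its cluster whose entry point happens to be adjacent to $s$ can have $d_{\approxsol}(s,v)$ as large as about $2\gamma+1$ while $d_{\optsol}(s,v)=1$. The decomposition into entry-distance plus in-cluster depth is precisely what dissolves this difficulty --- it amortizes the unavoidable $+\gamma$ ``last mile'' inside a cluster against the $|\cluster{j}|-1$ unit depths that \optsol must pay as well --- and the convexity-based lower bound $d_{\optsol}(s,\rho_j)\ge L_j$ is the single genuinely structural ingredient that makes the inter-cluster comparison go through.
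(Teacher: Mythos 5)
Your proof is correct and follows essentially the same route as the paper's: both split $d(s,v)$ through the cluster entry point, bound the entry-point distance in \approxsol by roughly $\gamma$ times the BFS level of the cluster in the contracted graph $G'$, and lower-bound the optimal entry-point distance by that same level via the contraction of \optsol. The only difference is bookkeeping: the paper sums the per-vertex bound $d_{\approxsol}(s,v)\le\gamma\,(d_{\optsol}(s,v)+1)$ and absorbs the resulting additive $\gamma n$ into $\gamma\,\cost(\optsol)$ globally, whereas you amortize the additive $+\gamma$ per cluster against the $|\cluster{j}|-1$ unit depths that \optsol must pay inside $\cluster{j}$ --- and you are in fact slightly more careful in charging the cross edges, writing $(\gamma+1)L_j$ rather than $\gamma L_j$ and then invoking $\gamma\ge 1$.
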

\begin{proof}
We first prove that for every $v \in V$ it holds that $ d_{\approxsol}(s,v)
\le \gamma(d_{\optsol}(s, v)+1)$. In particular, let us assume that
$\cluster{i}$ is
the cluster of $\V$ containing $v$. Moreover, let \optsolident be the tree
obtained
from \optsol by identifying each cluster $\cluster{j} \in \V$ into a single
cluster-vertex \cluvertexopt{j}. Observe that $r_i$ is the vertex chosen by
Algorithm~\ref{alg:clustered_bfs_apx} at line~\ref{alg:clustered_bfs_apx:liner}
w.r.t. the cluster $\cluster{i}$ containing $v$.
Therefore, we have that:
\begin{align*}
d_{\approxsol}(s, v) & =  d_{\approxsol}(s, r_i) + d_{\approxsol}(r_i, v) \le  \gamma \, d_{\clustbfstree}(s, \cluvertex{i}) + \gamma\\
& \le   \gamma \,d_{\optsolident}(s, \cluvertexopt{i}) + \gamma  \le  \gamma \, d_{\optsol}(s,v) + \gamma \\&= \gamma \, (d_{\optsol}(s, v)+1),
\end{align*}
from which it follows:
\begin{align*}
\cost(\approxsol) & = \sum_{v \in V} d_{\approxsol}(s, v) \le \sum_{v \in
V} \gamma \, ( d_{\optsol}(s, v)+1)\\& \le \gamma \sum_{v \in V} d_{\optsol}(s,v) + \gamma \, n
\end{align*}
and therefore $\cost(\approxsol)\le \gamma \, \cost(\optsol) + \gamma \, n \le 2 \gamma \,\cost(\optsol)$.
 \qed
\end{proof}

Given the above lemma, we are now ready to prove the following theorem.

\begin{theorem}
Algorithm~\ref{alg:clustered_bfs_apx} is a polynomial-time $\rho$-approximation
algorithm for \clubfs, where $\rho=\min\{\frac{4n k}{\gamma},\frac{4n^2}{\gamma^2},2\gamma\}$.
\end{theorem}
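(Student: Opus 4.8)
The plan is to first confirm that \approxsol is a feasible clustered BFS tree computable in polynomial time, and then to prove the stated ratio by establishing each of the three bounds in the minimum separately and keeping the smallest. All the BFS computations and the cluster contraction run in polynomial time. Moreover, \approxsol consists of a BFS tree of $G[\cluster{1}]$ together with, for every cluster edge of \clustbfstree, exactly one inter-cluster edge plus a BFS tree of the corresponding child cluster; a direct edge count gives $n-1$ edges, the construction is connected (since \clustbfstree spans $G'$ and each cluster subgraph is connected), and each cluster induces precisely one of the $T_j$, hence a connected subtree. Thus \approxsol is feasible. Since the bound $\cost(\approxsol) \le 2\gamma\,\cost(\optsol)$ is exactly Lemma~\ref{lemma:clustered_bfs_ub_cost}, it only remains to prove $\cost(\approxsol) \le \frac{4nk}{\gamma}\,\cost(\optsol)$ and $\cost(\approxsol) \le \frac{4n^2}{\gamma^2}\,\cost(\optsol)$.

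The crucial ingredient, shared by both remaining bounds, is the lower bound $\cost(\optsol) \ge \frac{\gamma^2}{4}$. To prove it I would fix a cluster \cluster{i} attaining the maximum diameter $\gamma$, together with two of its vertices $u,v$ satisfying $d_{G[\cluster{i}]}(u,v)=\gamma$. In any feasible tree, and in particular in \optsol, the subgraph induced by \cluster{i} is connected, so the tree path $P$ between $u$ and $v$ stays entirely inside \cluster{i} and therefore has length at least $\gamma$. Letting $w$ be the vertex of $P$ closest to $s$ in \optsol, moving along $P$ away from $w$ strictly increases the distance from $s$ (a basic property of tree paths), so if the two sides of $P$ at $w$ have lengths $a$ and $b$ with $a+b\ge\gamma$, the vertices of $P$ contribute at least $\sum_{t=1}^{a} t + \sum_{t=1}^{b} t \ge \frac{a^2+b^2}{2} \ge \frac{(a+b)^2}{4} \ge \frac{\gamma^2}{4}$ to $\cost(\optsol)$, giving the claim.

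For the upper bounds on $\cost(\approxsol)$ I would exploit the structure of the algorithm. As \clustbfstree is a BFS tree of the $k$-vertex graph $G'$, its depth is at most $k-1$, and the inequality derived within the proof of Lemma~\ref{lemma:clustered_bfs_ub_cost} yields $d_{\approxsol}(s,v) \le \gamma\, d_{\clustbfstree}(s,\cluvertex{i}) + \gamma \le \gamma k$ for every $v\in\cluster{i}$; summing over the $n$ vertices gives $\cost(\approxsol) \le n\gamma k$, so $\frac{\cost(\approxsol)}{\cost(\optsol)} \le \frac{n\gamma k}{\gamma^2/4} = \frac{4nk}{\gamma}$. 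For the second ratio I would simply observe that any path in a spanning tree has length at most $n-1$, hence $\cost(\approxsol) \le n(n-1) < n^2$ and $\frac{\cost(\approxsol)}{\cost(\optsol)} \le \frac{n^2}{\gamma^2/4} = \frac{4n^2}{\gamma^2}$. Taking the minimum of the three ratios completes the argument. I expect the main obstacle to be the lower bound of the second paragraph: recognizing that a single high-diameter cluster forces a connected subtree containing a long path whose vertices accumulate a quadratic-in-$\gamma$ total distance is precisely what drives both ratios and explains why the algorithm becomes $O(1)$-approximate at both extremes of $\gamma$.
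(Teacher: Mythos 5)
Your proof is correct and follows essentially the same route as the paper's: lower-bound $\cost(\optsol)$ by $\gamma^2/4$ via a maximum-diameter cluster, upper-bound $\cost(\approxsol)$ by $\gamma n k$ and by $n^2$, and combine with Lemma~\ref{lemma:clustered_bfs_ub_cost}. If anything your write-up is slightly more careful: you justify the $\gamma^2/4$ bound for an arbitrary split point of the diametral path rather than assuming the center, and you correctly attribute the bound $d(s,v)\le \gamma k$ to the structure of the algorithm's output (it need not hold for every feasible solution, as the paper's phrasing suggests).
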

\begin{proof}
First of all, note that there is a least one cluster $\cluster{i}$
such that $\diam(G[\cluster{i}])=\gamma$, and hence it follows that $\cost(\optsol) \ge
\frac{\gamma^2}{4}$. Indeed, if $\gamma$ is even, we have that an optimal solution must pay at least the cost of two paths rooted at the center of a diametral path, namely
\[\cost(\optsol) \ge 2\sum_{i=1}^{\gamma/2} i =  \frac{\gamma^2}{4}+\frac{\gamma}{2}.\]
Similarly, if $\gamma$ is odd, we have
\[\cost(\optsol)  \ge
\sum_{i=1}^{(\gamma-1)/2} i + \sum_{i=1}^{(\gamma+1)/2} i =  \frac{\gamma^2}{4}+\frac{\gamma}{2}+\frac{1}{4}.\]
Now, we observe that $\cost(\approxsol)$ is upper bounded by: \begin{itemize}\item[(i)] $\gamma n k$, since in any feasible solution $T$ to \clubfs, it holds that $d_T(s,v) \leq \gamma k,~\forall v \in V$; \item[(ii)] $n^2$, since $d_G(s,v) \leq n,~\forall v \in V$.\end{itemize}
Therefore, since $\cost(\optsol)\ge \frac{\gamma^2}{4}$, the approximation ratio achieved by Algorithm~\ref{alg:clustered_bfs_apx} is always upper bounded by $\min\{\gamma n k,n^2\} \cdot \frac{4}{\gamma^2} = \min\{\frac{4 n k}{\gamma},\frac{4 n^2}{\gamma^2}\}$.
Moreover, by Lemma~\ref{lemma:clustered_bfs_ub_cost} we also know that $\cost(\approxsol) \le 2 \gamma \, \cost(\optsol)$. Hence, overall, Algorithm~\ref{alg:clustered_bfs_apx} always computes a solution \approxsol such that $\frac{\cost(\approxsol)}{\cost(\optsol)} \le \rho$, where $\rho=\min\{\frac{4n k}{\gamma},\frac{4n^2}{\gamma^2},2\gamma\}$. 
Since the time complexity is upper bounded by the cost of computing the BFS trees, the claim follows. \qed
\end{proof}

Notice that each of the three terms in $\rho$ can be the minimum one, depending on the structure of a given instance of \clubfs.
In particular, the first term is the unique minimum when $\sqrt{2nk} < \gamma < \frac{n}{k}$, and the considered interval is not empty, i.e., when $\sqrt{2nk}<\frac{n}{k}$, which implies $k < \sqrt[3]{n/2}$.
In this latter case, the second term is to be preferred when $\gamma > \frac{n}{k}$, while the minimum is attained by the third term when $\gamma < \sqrt{2nk}$.
Otherwise, i.e., when $k \ge \sqrt[3]{n/2}$, and hence the aforementioned interval is empty, then the second (resp., third) term is the unique minimum when $\gamma$ is larger (resp., smaller) than $\sqrt[3]{2 n^2}$.
Remarkably, when $\gamma$ is either $O(1)$ or $\Theta(n)$, our algorithm thus provides a $O(1)$-approximation ratio.
Finally, notice that if we set $\gamma = \Theta(n^\frac{2}{3})$ and $k=\Theta(n^\frac{1}{3})$, then the three terms in $\rho$ coincide and are equal to $\Theta(n^\frac{2}{3})$, which then happens to be the achieved ratio of our approximation algorithm in the worst case.

\subsection{Fixed-Parameter Tractability} \label{ssec:FPT-CluBFS}
In this subsection, we prove that \clubfs is \emph{fixed-parameter tractable}
(FPT) w.r.t. two natural cluster-related parameters, by providing two different
FPT algorithms, namely \firstfpt and \secondfpt. Recall that an FPT algorithm is
allowed to have an exponential running time, but only in terms of some natural
parameter of the problem instance that can be expected to be small in typical
applications.

\subsubsection{Algorithm \firstfpt}
In \firstfpt we choose as our first natural parameter the number $k$ of clusters of $\V$.
Notice that every feasible solution $T$ for \clubfs induces a \emph{cluster-tree}
\clustertree obtained from $T$ by identifying the vertices of the same
cluster into a single vertex. The main idea underlying the algorithm is that of guessing, for
each vertex of an optimal cluster-tree $\optclustertree$, the vertices belonging to the subtrees rooted in (one of)
its children and then to iteratively reconstruct $\optclustertree$.
For the sake of simplicity, in the following we will assume that $n$ is a power of two. However, note that this assumption can be removed by either modifying the input graph or by tweaking the definition of the functions $f_{v,i}$ and $g_{v,i}$ that are given later in this subsection.

We start with some definitions. First, given an instance $\langle G, \V, v\rangle$ of \clubfs, for any $\cluster{i} \in \V$, and $v \in
\cluster{i}$, we let $\bfs{\cluster{i}}[v]$ be the cost of a BFS tree of
$G[\cluster{i}]$ having source vertex $v$, i.e.,
\[
\bfs{\cluster{i}}[v] = \sum_{u \in \cluster{i}} d_{G[\cluster{i}]}(v,u).
\]

Then, we define a set $U=\V \cup A$ as the union of the set of clusters $\V$ with a set $A=\{a_1, \dots, a_{\log n}\}$, containing $\log n$ additional elements. Moreover, we let $\mu : 2^A \to V$ be a bijection that maps each of the $2^{\log n}=n$ subsets of $A$ to a  vertex of $V$.

For each $\varset \subseteq U$, we let $\opt_{v, i}[\varset]$ be a quantity
depending on the cost $c^*$ of an optimal solution to an auxiliary instance $\langle G', \varset \cap \V, v\rangle$ of \clubfs, where $G'$ is the subgraph of $G$ induced by the vertices in $\cup_{C \in \varset \cap \V} C$, provided that the following constraints are all satisfied:

\begin{enumerate}
\item[(i)] $|\varset \cap \V | \le i$ (i.e., we restrict to subproblems having at most $i$ clusters);
\item[(ii)] $v \in \cup_{C \in \varset \cap \V} C$;
\item[(iii)] $G'$ is connected;
\item[(iv)] $A \subseteq \varset$.
\end{enumerate}

Let $M$ be a parameter whose value will be specified later.
If all the above mentioned four constraints are satisfied and $c^* < M$, then we define $\opt_{v, i}[\varset]=c^*$.
Otherwise, if (i) is satisfied and either $c^* \ge M$ or at least one of (ii)--(iv) is not satisfied, then we allow $\opt_{v, i}[\varset]$ to be any value larger than or equal to $M$.
Finally, if (i) is not satisfied, then we allow  $\opt_{v,i}[\varset]$ to be any upper bound to $c^*$.

Therefore, according to our definition, we have that, whenever (i), (ii), (iii), and (iv) are satisfied, we can set:
\begin{equation}
\label{eq:fpt1_modified_base}
\opt_{v,1}[\varset] = \min\{\bfs{\varset \cap \V}[v], M \}
\end{equation}
Otherwise we set $\opt_{v,1}[\varset]=M$.

Now, let $\eta(\varset \cap \V) = \sum_{C \in \varset \cap \V} |C|$ be
the number of vertices in the clusters of $\varset \cap \V$. Moreover, given a vertex $v \in V$, let \[
\ell(v,v') = \min_{ \substack{(x,v') \in E(G) \\ x \in V(G[R])}} \{ d_{G[R]}(v,x) + 1 \},
\]
where $R \in \varset \cap \V$ is the cluster containing $v$ and $\ell(v,v')$ is the shortest among the paths from $v$ to $v'$ that
traverse only vertices in $R$, except for $v'$. If there is no such path, then $\ell(v,v') = +\infty$.

Hence, for $i>1$ we can write the following recursive formula:
\begin{equation}
\label{eq:fpt1}
\begin{split}
 \opt_{v,i}[\varset] &= \min_{ \varsetprime \subseteq \varset } \big\{
	L(v, \mu(\varsetprime \cap A), \varsetprime \cap V)
	\\&+ \opt_{\mu(\varsetprime \cap A), i-1}[ (\varsetprime \cap \V) \cup A ]
	\\&+ \opt_{v, i-1}[  (\varset \setminus \varsetprime) \cup A \big\},
\end{split}
\end{equation}
where $L(v, \mu(\varsetprime \cap A), \varsetprime \cap V) = \min\{\ell(v, v') \eta(\varsetprime \cap \V), M\}$ accounts for (the lengths of) the portions of the shortest paths from $v$ to the vertices in $C' = \cup_{C \in \varsetprime \cap V} C$  whose edges are not in the subgraph induced by $C'$.

Given the above formula, we now show that $\opt_{v, i}[\varset]$ for $i>1$ can be computed efficiently by exploiting a result provided in~\cite{BHKK07}, namely the following:
\begin{theorem}[\cite{BHKK07}]
Given a set $X$ and two functions $f,g : 2^X \to [-W, \dots, W]$, it is possible to compute in $\mathtt{conv}(W,X):= O(W \cdot |X|^3 \cdot 2^{|X|} \cdot \polylog(W,|X|))$ time\footnote{The runtime originally given in~\cite{BHKK07} is here restated on our (implicitly assumed) model of computation, namely the standard unit-cost RAM with logarithmic word size, on which the $O(|X|^2 \cdot 2^{|X|})$ ring operations performed in~\cite{BHKK07} cost $O(W \cdot |X| \cdot \polylog(W,|X|))$ time each.
Notice that we are explicitly stating polynomial factors in $|X|$, i.e., logarithmic factors in $2^{|X|}$, which are disregarded in~\cite{BHKK07},   since they will result in polynomial factors in $k$ in the running time of our FPT algorithm.} the \emph{subset convolution} $(f*g)$ of $f$ and $g$ over the min-sum semiring, i.e., for every set $Y \subseteq X$ the quantity:
\[
	(f * g)(Y) = \min_{Z \subseteq Y} \{  f(Z) + g(Y \setminus Z)\}.
\]
\label{th:bjork}
\end{theorem}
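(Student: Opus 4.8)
The plan is to reduce the computation to a subset convolution over a commutative \emph{ring}, which can be carried out with only $O(|X|^2 2^{|X|})$ ring operations via the ranked zeta/Möbius transform technique, and then to realise the min-sum semiring inside a polynomial ring so that each such ring operation costs $O(W \cdot |X| \cdot \polylog(W,|X|))$ time. Writing $N := |X|$, the first ingredient I would recall is the fast zeta transform (Yates' algorithm): for any $\phi : 2^X \to R$ over a ring $R$ it computes $\hat{\phi}(S) = \sum_{T \subseteq S} \phi(T)$ for all $S$ in $O(N 2^{N})$ ring operations, and its signed inverse (the Möbius transform) has the same cost.

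The heart of the ring case is the \emph{rank trick}. For each $j \in \{0,\dots,N\}$ define the rank-restricted functions $f_j(S)=f(S)$ if $|S|=j$ and $0$ otherwise, and likewise $g_j$; compute all the transforms $\hat{f}_j,\hat{g}_j$, then set
\[
\hat{h}_j(S) = \sum_{i=0}^{j} \hat{f}_i(S)\,\hat{g}_{j-i}(S),
\]
apply the inverse Möbius transform to each $\hat{h}_j$ to obtain $h_j$, and finally read off $(f * g)(S) = h_{|S|}(S)$. Correctness follows because, for $|S|=j$, the Möbius inversion forces $A \cup B = S$ while the rank bookkeeping forces $|A|+|B| = |S|$, so any contributing pair $(A,B)$ with $A,B\subseteq S$ must satisfy $A \cap B = \emptyset$ and $B = S\setminus A$; hence the disjointness constraint in the convolution is enforced automatically. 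The whole pipeline uses $O(N^2 2^{N})$ ring operations.

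Next I would embed the min-sum semiring into $\mathbb{Z}[x]$ through the monomial map $v \mapsto x^{\,v+W}$, so that the semiring sum $a+b$ becomes the product of monomials $x^{a+W}x^{b+W}$, while the semiring minimum corresponds to selecting the lowest-degree monomial appearing with nonzero coefficient. Running the ring algorithm above over $\mathbb{Z}[x]$ on the lifted functions yields, for every $Y$, the polynomial $\sum_{Z \subseteq Y} x^{\,f(Z)+g(Y\setminus Z)+2W}$, whose smallest exponent, shifted back by $2W$, is exactly $(f*g)(Y)$ in the min-sum semiring. Since all values lie in $[-W,W]$, each polynomial has degree $O(W)$; since its coefficients merely count admissible splits, they are bounded in absolute value by $2^{O(N)}$ and so fit in $O(N)$-bit words. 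Consequently each ring operation is either an addition of degree-$O(W)$ polynomials or an FFT-based multiplication performing $\tilde{O}(W)$ coefficient multiplications of $O(N)$-bit integers, costing $O(W\cdot |X| \cdot \polylog(W,|X|))$ time; multiplying by the $O(|X|^2 2^{|X|})$ ring operations gives the claimed $\mathtt{conv}(W,X)$ bound.

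I expect the main obstacle to lie in the correctness of the extraction step in the presence of the \emph{signed} Möbius inversion: the intermediate polynomials $\hat{h}_j(S)$ may acquire negative coefficients, so one must argue that the final polynomials $h_{|X|}(\cdot)$ nonetheless have genuinely non-negative coefficients (they enumerate valid subset splits) and that the smallest exponent with a nonzero coefficient is read off correctly. A secondary technical point is the uniform bounding of coefficient magnitudes across the zeta, ranked-product, and Möbius phases to certify the $O(|X|)$-bit word size, together with the degree shift needed to absorb the negative values in $[-W,W]$.
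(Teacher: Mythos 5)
The paper does not actually prove this statement---it is imported from \cite{BHKK07}, with the footnote only restating the running time on the unit-cost RAM---and your proposal is a faithful reconstruction of exactly the argument in that source: the ranked zeta/M\"obius transform reduces subset convolution over a ring to $O(|X|^2 2^{|X|})$ ring operations, and the embedding $v \mapsto x^{v+W}$ realises the min-sum semiring in $\mathbb{Z}[x]$ at a cost of $O(W \cdot |X| \cdot \polylog(W,|X|))$ per ring operation, matching the footnote's accounting. Your closing worry about signed cancellations is resolved just as you suspect: the final polynomial $h_{|S|}(S)$ equals $\sum_{Z \subseteq S} x^{f(Z)+g(S\setminus Z)+2W}$ exactly, hence has non-negative integer coefficients and a well-defined minimal exponent.
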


In particular, the main idea here is to express the values $\opt_{v, i}[\varset]$ as subset convolutions of two suitable functions $f$ and $g$.
In more details, notice that Equation~\eqref{eq:fpt1} can be rewritten as follows:
\begin{equation}
	\label{eq:fpt1_convolution}
	\opt_{v,i}[\varset] = \min_{ \varsetprime \subseteq \varset } \{ f_{v,i}(\varsetprime) + g_{v,i}(\varset \setminus \varsetprime) \} = (f_{v,i} * g_{v,i})(\varset)
\end{equation}

\noindent
once we define
\begin{align*}
	f_{v,i}(X) &= L(v, \mu(X \cap A), X \cap V) + \opt_{\mu(X \cap A), i-1}[ (X \cap \V) \cup A ], \mbox{ and}\\
	g_{v,i}(X) &= \opt_{v, i-1}[ X \cup A ].
\end{align*}

Notice also that, if we interpret $M$ to be an upper bound to the cost of any optimal solution of the original \clubfs instance (e.g., by selecting $M=n^2$), then we have that $\opt_{v, i}[S \cup A]$, for every $i \ge |S|$ and for any $S \subseteq \V$, coincides with the cost of the optimal solution to the instance $\langle G', S, v\rangle$ of \clubfs whenever such an instance is feasible. Otherwise, we have that $\opt_{v, i}[S \cup A]$ is at least $M$.

Hence, the above relation can be exploited to define the following algorithmic process. We start by choosing $M=1$ and then we perform a series of rounds as follows.
In each round, we first determine all the values $\opt_{v,1}[\varset]$ by using Equation~\eqref{eq:fpt1_modified_base}. Then, for every $i=2, \dots, k$, we compute $n$ subset convolutions as shown in~Equation\eqref{eq:fpt1_convolution} (using Theorem~\ref{th:bjork}).
In more details, we compute $f_{v,i} * g_{v,i}$ of Equation~\eqref{eq:fpt1_convolution} for each vertex $v \in V$.

Finally, we set $\opt_{v,i}[\varset] = \min\{ (f_{v,i} * g_{v,i})(\varset), M \}$ and we move to the next iteration. Here the minimum is necessary in order to ensure that the values computed by the subset convolutions that rely on $\opt_{v,i}[\varset]$ will be in $O(M)$.
After the last iteration of this round is completed, $\opt_{s,k}[\V \cup A]$ stores either $M$ or a value strictly smaller than $M$.
%
On the one hand, if $\opt_{s,k}[\V \cup A]<M$, we have found the cost $\opt$ of an optimal solution of the original instance, i.e., %
$\opt = \opt_{v, k}[\V \cup A]$.
The optimal tree \optclustertree can then be reconstructed from the values
$\opt_{v, k}[S \cup A]$ for any $S \subseteq \V$, by using, e.g., the method in~\cite{DPV06}.
On the other hand, if $\opt_{s,k}[\V \cup A] = M$, we move to the next round: we double the value of $M$ and repeat the above procedure.
We are now ready to give the following result.

\begin{lemma}
\label{lm:clubfs_fpt1_modified}
\clubfs can be solved in $\softO(2^k k^3 n^4)$ time.
\end{lemma}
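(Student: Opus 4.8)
The plan is to first establish that the dynamic program defined by~\eqref{eq:fpt1_modified_base} and~\eqref{eq:fpt1} is correct, i.e.\ that $\opt_{s,k}[\V \cup A]$ equals the cost $\opt$ of an optimal clustered BFS tree as soon as the current threshold $M$ exceeds $\opt$, and then to charge the total cost of filling the table through the fast subset convolution of Theorem~\ref{th:bjork}. Correctness of the table entries is proved by induction on the index $i$, exploiting the bijection between feasible solutions and their induced cluster-trees.

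For the base case $i = 1$, a subproblem contains a single cluster $R$, and an optimal clustered BFS tree on $R$ rooted at $v$ is just a BFS tree of $G[R]$ from $v$, of cost $\bfs{R}[v]$; this matches~\eqref{eq:fpt1_modified_base} up to the cap at $M$. For the inductive step I would root the optimal cluster-tree of a subproblem $\langle G', \varset \cap \V, v\rangle$ at the cluster $R \ni v$, single out one child subtree, and let $\varsetprime$ collect the clusters it spans, with $v' = \mu(\varsetprime \cap A)$ its entry vertex (the child-side vertex at which the subtree attaches to $R$). The key point is that the broadcast cost splits additively: every vertex in the clusters of $\varsetprime \cap \V$ pays the shared prefix $\ell(v,v')$ that leads from $v$ into $v'$, contributing $\ell(v,v')\,\eta(\varsetprime \cap \V)$ — exactly the term $L(\cdot)$ — plus its distance from $v'$ inside the clusters of $\varsetprime$, which is precisely a subproblem rooted at $v'$; the remaining vertices form the subproblem $(\varset \setminus \varsetprime) \cup A$ still rooted at $v$. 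Since both parts contain strictly fewer clusters, the inductive hypothesis applies at index $i-1$, giving~\eqref{eq:fpt1}. I must check both inequalities: $\le$ by taking the split induced by an optimum, and $\ge$ by observing that every split corresponds to a genuine feasible tree of the claimed cost (or is discarded because its value reaches $M$). The $\ge M$ convention is what silently removes infeasible splits, such as those violating connectivity or separating $R$ from $v$.

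The role of the auxiliary set $A$ deserves care, and this is where I expect the main difficulty to lie: the bijection $\mu : 2^A \to V$ lets the single minimisation over $\varsetprime \subseteq \varset$ range simultaneously over the choice of the child cluster-set ($\varsetprime \cap \V$) and of its entry vertex ($\mu(\varsetprime \cap A)$). Verifying that this encoding makes~\eqref{eq:fpt1} literally equal to the subset convolution~\eqref{eq:fpt1_convolution} of $f_{v,i}$ and $g_{v,i}$, and that the $\ell \cdot \eta$ charging is consistent with the distances realised in any reconstructed tree, is the crux of the correctness argument.

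Finally, for the running time I would observe that, because $M$ is doubled across rounds and can be capped at $O(n^2) \ge \opt$, the per-round work is linear in $M$ and the rounds form a geometric series dominated by the last one, in which the weights satisfy $W = O(n^2)$. Within a round the values $\opt_{v,1}[\cdot]$ are computed directly, and then for each $i = 2,\dots,k$ and each $v \in V$ a single subset convolution~\eqref{eq:fpt1_convolution} is evaluated over the ground set $U$, of size $|U| = k + \log n$. By Theorem~\ref{th:bjork} each such convolution costs $\mathtt{conv}(O(n^2), U)$; plugging in $2^{|U|} = 2^k n$ and $|U| = \softO(k)$, and multiplying by the $O(nk)$ convolutions performed, the polynomial bookkeeping simplifies to the claimed $\softO(2^k k^3 n^4)$. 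Reconstructing $\optclustertree$ by standard back-tracking over the table does not increase this bound.
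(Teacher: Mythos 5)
Your proposal follows essentially the same route as the paper: the same dynamic program over subsets of $U=\V\cup A$ evaluated with the fast subset convolution of Theorem~\ref{th:bjork}, the same doubling of $M$ with the round costs forming a geometric series dominated by the last round where $M=O(\opt)=O(n^2)$, and the same correctness argument via the cluster-tree decomposition (which the paper places in the discussion preceding the lemma rather than inside its proof). The only quibble is one you share with the paper itself: charging the $O(nk)$ convolutions per round at $\softO(M k^3 2^k n)$ each actually yields $\softO(2^k k^4 n^4)$ rather than $k^3$, a harmless extra factor of $k$ that neither you nor the paper accounts for explicitly.
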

\begin{proof}
First of all, notice that the cost of all the BFS trees of the clusters in $C \in \V$, from all the vertices $v \in V$, can be computed in $\softO(nm)$ time. 
Hence, it follows that all the $n \cdot 2^{|U|} = n \cdot 2^{k + \log n} = n^2 \cdot 2^k$ base cases $\opt_{v,1}[\varset]$ can be computed in $O(n^3 + n^2 \cdot 2^k)$ time.
	
Now we focus on the values $\opt_{v,i}[\varset]$ having $i>1$. In particular, notice that, for each $M$ considered in the process, since functions $f_{v,i}$ and $g_{v,i}$ have values between $0$ and $2M$, we can compute all $n$ values $\opt_{v,i}[\varset]$ for each $\varset \subseteq U$, in

\begin{align*}
n \cdot \mathtt{conv}(2M,U) &= n \cdot O(2M \cdot |U|^3 \cdot 2^{|U|} \cdot \polylog(2M,|U|)) \\
&=n \cdot O(M \cdot (k+\log n)^3 \cdot 2^{k+\log n} \cdot \polylog(M,k + \log n)) \\
&=n \cdot O(M \cdot k^3 \cdot n \cdot 2^k \cdot \polylog(n^2, n + \log n))=
\softO(M \cdot 2^k \cdot k^3 \cdot n^2).
\end{align*}
\noindent time.
Overall, we perform at most $1 + \left\lceil \log \opt \right\rceil$ rounds, since we stop as soon as $M > OPT$ (i.e., when
we have $M>\opt_{s,k}[\V \cup A] = \opt$).
Hence, the overall time complexity of all rounds is
\begin{align*}
 & = \softO\left(\sum_{M=1,2,4,\dots,2\opt} M \cdot 2^k \cdot k^3 \cdot n^2 \right)\\
 & = \softO(\opt \cdot 2^k \cdot k^3 \cdot n^2)\\
 & = \softO(2^k \cdot k^3 \cdot n^4),
 \end{align*}
since $\Theta(n^2)$ is a trivial upper bound on the cost \opt of any feasible solution to \clubfs.
\qed
\end{proof}

It is worth noting that, in realistic settings, the number of clusters depends on various parameters, such as type of deployed devices and network density.
However, it is almost always expected to be a small fraction w.r.t. overall number of vertices (see, e.g.~\cite{FJARQK12,SK08}). Thus, \firstfpt might result in being truly effective in practice.

However, when this is not the case, then its running time might easily become impractical.
In particular, if we focus on the classical BFS tree problem, which can be seen as a special instance of \clubfs where each cluster contains only one vertex, it is easy to see that \firstfpt takes exponential time while the problem is known to be trivially solvable in $O(m+n)$ time! This suggests that, for the case in which $\V$ consists of many singleton clusters, there must be another parametrization yielding a better complexity. Following this intuition, in the remaining of this section we present another FPT algorithm, namely \secondfpt, parameterized in $\param = |\{v\in V : v \in \cluster{i}, \cluster{i} \in \V, |\cluster{i}| > 1 \}|$, i.e., in the total number of vertices that belong to clusters of size at least two.

\subsubsection{Algorithm \secondfpt}
The idea underlying \secondfpt is as follows. Given a solution $T$ to \clubfs
we call a \emph{cluster root} for $\cluster{i}\in \V$ the unique vertex $v\in
\cluster{i}$ with the smallest distance from $s$ in $T$. The \secondfpt
algorithm guesses the root of each cluster in an optimal solution \optsol
and then computes the optimal way of connecting the different roots of the
clusters together.

Suppose we know a vector $\langle v_1, \dots, v_k \rangle$ of vertices such that
$v_i \in \cluster{i}$. The key observation is that we can write the cost of any
solution $T$ having vertices $v_1, \dots, v_k$ as cluster roots as follows:
\begin{align*}
\cost(T) &= \sum_{\cluster{i} \in \V} \sum_{v \in \cluster{i}} d_T(s, v) \\
  &=  \sum_{\cluster{i} \in \V} \left( |\cluster{i}| d_T(s, v_i) +
\sum_{v \in \cluster{i}} d_T(v_i, v) \right) \\
& = \sum_{\cluster{i} \in \V} |\cluster{i}| d_T(s, v_i)  + \sum_{v \in \cluster{i}} d_T(v_i, v).
\end{align*}

Since $d_T(v_i, v) \ge d_{G[\cluster{i}]}(v_i, v)$, for any $v \in \cluster{i}$,
the second summation is minimized when $d_T(v_i, v) = d_{G[\cluster{i}]}(v_i,
v)$, i.e.,
when
$T[\cluster{i}]$ is a BFS tree of $G[\cluster{i}]$.
Consider now the first summation,
and focus on its generic $i$-th term. Let
$\V'$ be the set of clusters traversed by the path $\pi = \pi_T(s,v_i)$. For
each cluster $\cluster{j} \in \V'$, let $x,y \in \cluster{j}$ be the first and
last vertex of $\cluster{j}$ traversed by $\pi$, respectively. By the
definition of \clubfs, and of cluster root, for \cluster{i} we have that: (i)
all the vertices in the subpath of $\pi$ between $x$ and $y$, say $\pi[x,y]$,
belong to $\cluster{i}$ and (ii) $ x = v_i$.
Let $P_i$ be the set of all the paths in $G$ from $s$ to $v_i$ satisfying
conditions (i) and (ii).
It is easy to see that $d_T(s, v_i) \ge \min_{\pi'\in P_i}|\pi'|$. Hence, if $T$
contains, for each $\cluster{i} \in \V$, the shortest path in $P_i$ then
$\sum_{\cluster{i} \in
\V} |\cluster{i}| d_T(s, v_i)$ is minimized.
To determine any path in $P_i$ we proceed as follows. We define an auxiliary
directed graph $G'$, obtained from $G$ by:
(i) removing all the edges $(x,y) \in E$ such that neither $x$ nor $y$ is a
root-vertex $v_i$ for some $i$;
(ii) directing all the edges $(x,y) \in E$ such that $x$ or $y$ is a
root-vertex $v_i$ for some $i$ towards $v_i$; if both $x=v_i$ and $y=v_j$ (for
some $i,j$) then we replace the undirected $(x,y)$ by the pair of directed edges
$(x,y)$ and $(y,x)$;
(iii) replacing, for all $\cluster{i} \in \V$, all the edges in
$E(G[\cluster{i}])$ with the
edges of a BFS tree of $G[\cluster{i}]$ rooted in $v_i$. These edges are
directed from
the root towards the leaves of the tree.
It is easy to see that any path in $P_i$ is contained in $G'$, and that any BFS
tree of $G'$ must contain the edges of all the BFS trees of $G[\cluster{i}]$,
hence
minimizing $\cost(T)$. Therefore the optimal solution to the instance of \clubfs
contains exactly the (undirected version of) the edges of a BFS tree of $G'$.
The following lemma follows from the above discussion.

\begin{lemma}
\label{lm:clubfs_fpt2}
 \secondfpt solves \clubfs in $O( \param^ \frac{\param}{2} m)$ time.
\end{lemma}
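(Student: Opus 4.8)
The plan is to combine the structural characterization established in the discussion preceding the statement with a clean counting argument for the number of root-vectors that the algorithm must try. First I would settle correctness: \secondfpt enumerates every vector $\langle v_1,\dots,v_k\rangle$ of candidate cluster roots (with $v_i\in\cluster{i}$), and for each such vector it builds the auxiliary directed graph $G'$ and returns the (undirected) edge set of a BFS tree of $G'$, retaining the cheapest feasible solution encountered. By the preceding discussion, for the particular vector whose entries coincide with the cluster roots of a fixed optimal solution \optsol, the BFS tree of the corresponding $G'$ has cost exactly $\cost(\optsol)$; since every tried vector yields a feasible clustered BFS tree (whose cost is therefore at least \opt), taking the minimum over all vectors returns an optimal solution.

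Next I would bound the work per guess. For a fixed root-vector, constructing $G'$ only requires deleting, orienting, and replacing edges of $G$, together with computing one BFS tree inside each cluster; since the clusters are vertex-disjoint this costs $O(m+n)$ overall. Running a single BFS on $G'$ and evaluating $\cost(\cdot)$ then also takes $O(m+n)=O(m)$ time (recall that $G$ is connected, so $m\ge n-1$). Hence each enumerated vector is processed in $O(m)$ time.

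The crux is then to count the vectors. Singleton clusters force a unique choice, so the number of root-vectors equals $\prod_{i\,:\,|\cluster{i}|\ge 2}|\cluster{i}|$. I would prove the per-cluster inequality $|\cluster{i}|\le \param^{|\cluster{i}|/2}$ for every cluster of size at least two: writing $s=|\cluster{i}|$, we have $2\le s\le \param$ (since $s$ is one of the summands defining \param), and therefore $\tfrac{2\ln s}{s}\le \ln s\le \ln\param$, where the first inequality uses $s\ge 2$ and the second uses $s\le\param$; exponentiating gives $s\le\param^{s/2}$. Multiplying these bounds over all clusters of size at least two yields
\[
\prod_{i\,:\,|\cluster{i}|\ge 2}|\cluster{i}| \;\le\; \prod_{i\,:\,|\cluster{i}|\ge 2}\param^{|\cluster{i}|/2} \;=\; \param^{\frac12\sum_{i\,:\,|\cluster{i}|\ge 2}|\cluster{i}|} \;=\; \param^{\param/2}.
\]
Combining the $O(m)$ cost per guess with the $\param^{\param/2}$ bound on the number of guesses gives the claimed $O(\param^{\param/2}m)$ running time. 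I expect the only subtle point to be this product bound; everything else is a direct assembly of the structural facts already derived above.
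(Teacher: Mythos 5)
Your proposal is correct and follows essentially the same route as the paper: enumerate all root vectors, spend $O(m)$ per vector on the cluster BFS trees plus one BFS of $G'$, and bound the number of vectors by $\param^{\param/2}$. The only (cosmetic) difference is in the last step: you prove the per-cluster inequality $|\cluster{i}|\le \param^{|\cluster{i}|/2}$ and multiply, whereas the paper simply notes that there are at most $\param/2$ clusters of size at least two, each contributing a factor of at most $\param$ --- both arguments are valid and give the same bound.
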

\begin{proof}
There are $\prod_{\cluster{i} \in \V} |\cluster{i}|$ ways of
choosing a
vector of cluster root vertices $\langle v_1, \dots, v_k \rangle$ for a given
set \V{} of clusters.
For each of these vectors the algorithm requires a computation of the
BFS trees of $G[\cluster{i}]$ for $i=1,\dots, k$ plus an additional BFS tree of
$G'$.
This can be done in $O(m) + \sum_{i=1}^k O(|E(G[\cluster{i}])|
+ |\cluster{i}| ) = O(m)$ time.
Finally, notice that $\prod_{i=1}^{k} |\cluster{i}| \le \param^\frac{\param}{2}$ as the
total number of clusters of size at least $2$ is at most $\frac{\param}{2}$. \qed
\end{proof}
Since it is possible to show that $\optsol[\cluster{i}]$
must coincide with a BFS tree of $G[\cluster{i}]$ rooted at $r_i$, then this
property allows us to efficiently reconstruct the optimal tree \optsol, for
a given guessed set of roots.
Thus, overall, by combining \firstfpt and \secondfpt, we can give the following result:

\begin{theorem}
\clubfs can be solved in \totalfpt
time.
\end{theorem}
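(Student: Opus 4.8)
The plan is to simply combine the two algorithms developed in this subsection, whose correctness and running times are guaranteed by Lemmas~\ref{lm:clubfs_fpt1_modified} and~\ref{lm:clubfs_fpt2}. Both \firstfpt and \secondfpt are \emph{exact} algorithms for \clubfs, so either one may be used to obtain an optimal solution; the only thing to arrange is that we always pay the smaller of the two running times, namely $\softO(2^k k^3 n^4)$ for \firstfpt and $O(\param^{\param/2} m)$ for \secondfpt.

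First I would observe that both parameters governing these bounds, namely the number of clusters $k$ and the number \param of vertices lying in clusters of size at least two, are immediately available from the input instance $\langle G, \V, s\rangle$, together with $n$ and $m$. Hence, before launching any heavy computation, we can evaluate the two quantities $2^k k^3 n^4$ and $\param^{\param/2} m$ (up to the polylogarithmic factors absorbed by \softO) and simply select the algorithm whose bound is the minimum of the two. Running that algorithm then solves \clubfs in \totalfpt time, as claimed.

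The only point deserving care is that the selection step itself must not dominate the cost; but comparing the two bounds amounts to a constant number of arithmetic operations on numbers of polynomial bit-length, which is negligible. Alternatively, and to sidestep the comparison altogether, one could \emph{dovetail} the two executions, advancing them in lockstep and halting as soon as either terminates; this multiplies the running time by at most a factor of two, which is again absorbed by the asymptotic notation. Either way there is no genuine obstacle: the theorem is an immediate corollary of the two preceding lemmas, and the whole substance of the result lies in the two complementary parameterizations (in $k$ versus in \param) that have already been established.
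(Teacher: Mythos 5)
Your proposal is correct and matches the paper's own treatment: the theorem is stated there as an immediate consequence of combining \firstfpt and \secondfpt, whose running times are established in Lemmas~\ref{lm:clubfs_fpt1_modified} and~\ref{lm:clubfs_fpt2}. Your additional remark on how to realize the minimum (pre-computing the two bounds or dovetailing the executions) is a harmless elaboration of the same argument.
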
 
\section{\cluspt} \label{sec:cluSPT}
In this section, we give our results on the \cluspt problem. In particular,
we first show that \cluspt cannot be approximated, in polynomial
time, within a factor of $n^{1-\epsilon}$ for any constant $\epsilon \in (0,
1]$, unless $\p=\np$. Then, we give an $n$-approximation algorithm, thus
proving that the mentioned inapproximability result is (essentially) tight.
Finally, we show that, similarly to \clubfs, \cluspt is fixed-parameter
tractable.
Since \cluspt is a generalization of \clubfs, Theorem~\ref{th:clubfshard} immediately
implies that \cluspt is \np-hard as well. We can actually provide a stronger
result, namely:

\begin{figure}[ht]
	\centering
	\resizebox{.5\textwidth}{3.7cm}{
		\begin{tikzpicture}[font=\huge,semithick,transform shape,minimum size=8mm,inner sep=0pt]
		\node[draw, circle] (s) at (6,0) {$s$};
		\node[draw, circle] (1) at (0,2) {$v_1$};
		\node[draw, circle] (1n) at (1.5,2) {$\overline{v}_1$};
		\node[draw, circle] (2) at (3,2) {$v_2$};
		\node[draw, circle] (2n) at (4.5,2) {$\overline{v}_2$};
		\node[draw, circle] (3) at (6,2) {$v_3$};
		\node[draw, circle] (3n) at (7.5,2) {$\overline{v}_3$};
		
		\node[circle] (dots) at (9,2) {$\cdots$};
		\node[draw, circle] (n) at (10.5,2) {$v_\eta$};
		\node[draw, circle] (nn) at (12,2) {$\overline{v}_\eta$};

		\node[draw,circle] (c1) at (2.15,5.6) {$r_{1}$};
		\node[draw,circle] (c2) at (6.15,5.6) {$r_{2}$};
		\node[draw,circle] (c3) at (10.15,5.6) {$r_{3}$};
		
		\draw (c1) --(3.15,8.5) -- (1.15,8.5) -- (c1);
		\node[circle] (M1) at (2.15,7.5) {$M$};
		\draw (c2) --(7.15,8.5) -- (5.15,8.5) -- (c2);
		\node[circle] (M2) at (6.15,7.5) {$M$};
		\draw (c3) --(11.15,8.5) -- (9.15,8.5) -- (c3);
		\node[circle] (M3) at (10.15,7.6) {$M$};
		
		\node[draw,circle] (c11) at (1,6) {$c_{1,1}$};
		\node[draw,circle] (c12) at (3.1,6.5) {$c_{1,2}$};
		\node[draw,circle] (c13) at (2.5,4.5) {$c_{1,3}$};
		\node[draw,circle] (c21) at (5,6) {$c_{2,1}$};
		\node[draw,circle] (c22) at (7.1,6.5) {$c_{2,2}$};
		\node[draw,circle] (c23) at (6.5,4.5) {$c_{2,3}$};
		\node[draw,circle] (c31) at (9,6) {$c_{3,1}$};
		\node[draw,circle] (c32) at (11.1,6.5) {$c_{3,2}$};
		\node[draw,circle] (c33) at (10.5,4.5) {$c_{3,3}$};
		
		\draw (0.75,2) ellipse (1.4cm and .75cm);
		\draw (3.75,2) ellipse (1.4cm and .75cm);
		\draw (6.75,2) ellipse (1.4cm and .75cm);
		\draw (11.25,2) ellipse (1.4cm and .75cm);
		\draw (2.15,6.3) ellipse (1.9cm and 2.7cm);
		\draw (6.15,6.3) ellipse (1.9cm and 2.7cm);
		\draw (10.15,6.3) ellipse (1.9cm and 2.7cm);
		
		\draw[dashed]   (s) edge (1);
		\draw[dashed]   (s) edge (1n);
		\draw[dashed]   (s) edge (2);
		\draw[dashed]   (s) edge (2n);
		\draw[dashed]   (s) edge (3);
		\draw[dashed]   (s) edge (3n);
		\draw[dashed]   (s) edge (n);
		\draw[dashed]   (s) edge (nn);
		\draw   (c11) edge (c1);
		\draw   (c12) edge (c1);
		\draw   (c13) edge (c1);
		\draw   (c21) edge (c2);
		\draw   (c22) edge (c2);
		\draw   (c23) edge (c2);
		\draw   (c31) edge (c3);
		\draw   (c32) edge (c3);
		\draw   (c33) edge (c3);
		\draw   (1) edge (1n);
		\draw   (2) edge (2n);
		\draw   (3) edge (3n);
		\draw   (n) edge (nn);
		\draw[dashed]   (1) edge (c11);
		\draw[dashed]   (2) edge (c13);
		\draw[dashed]   (3) edge (c12);
		\draw[dashed]   (2n) edge (c21);
		\draw[dashed]   (3) edge (c23);
		\draw[dashed]   (n) edge (c22);
		\draw[dashed]   (2n) edge (c31);
		\draw[dashed]   (3n) edge (c33);
		\draw[dashed]   (nn) edge (c32);
		\end{tikzpicture}}
	
	\caption{Graphical representation of the reduction used in the proof of Theorem~\ref{th:clusptnonapx}.}%
	\label{fig:reduction_nonapxspt}%
\end{figure}
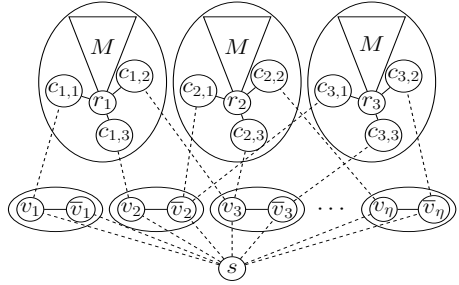

\begin{theorem}
\label{th:clusptnonapx}
\cluspt cannot be approximated, in polynomial time, within a factor
of $n^{1-\epsilon}$ for any constant $\epsilon \in (0, 1]$, unless $\p=\np$.
\end{theorem}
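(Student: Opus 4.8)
The plan is to design a gap-producing reduction from \tsat, reusing the graph $\phigraph$ from the proof of Theorem~\ref{th:clubfshard} as a scaffold but inflating the cost separation between the satisfiable and unsatisfiable cases so that it becomes multiplicatively huge rather than merely additive. The figure accompanying the statement (Fig.~\ref{fig:reduction_nonapxspt}) already hints at the needed gadget: to each clause-cluster we attach, via a new cluster-root vertex $r_j$, a pendant ``mass'' of $M$ extra vertices that all hang off $r_j$. By connecting this mass only through $r_j$, every one of the $M$ attached vertices inherits the distance $d_T(s,r_j)$ as an additive offset, so the broadcast cost becomes dominated by $M \cdot \sum_j d_T(s,r_j)$. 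Since the consecutiveness constraint forces the clause-cluster to be traversed in a way that depends on whether the corresponding literal can be made true, a single unsatisfiable clause will force at least one $r_j$ to lie one unit farther from $s$, and this single unit gets amplified by the factor $M$.

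First I would formalize the construction: keep the variable gadgets and source exactly as before, and for each clause $c_j$ introduce a root vertex $r_j$ joined to the three literal vertices $c_{j,1},c_{j,2},c_{j,3}$, together with a bundle of $M$ fresh degree-one vertices all adjacent to $r_j$; the cluster for clause $c_j$ is then $\{r_j,c_{j,1},c_{j,2},c_{j,3}\}\cup(\text{the $M$ pendant vertices})$. The key structural fact to argue is that, because each literal vertex connects to the variable layer, the cheapest way to root and lay out a clause-cluster places $r_j$ (and hence all of its $M$ dependents) at distance $2$ from $s$ \emph{iff} some literal of $c_j$ is satisfied by the truth assignment encoded in the variable layer, and at distance $\ge 3$ otherwise. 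Next I would carry out the two-sided estimate exactly as in Theorem~\ref{th:clubfshard} but now tracking the $M$-weighted terms: if $\phi$ is satisfiable we get $\opt \le M\cdot 2\mu + (\text{low-order terms})$, whereas if $\phi$ is unsatisfiable at least one clause is forced to distance $3$, giving $\opt \ge M\cdot(2\mu+1) + (\text{low-order terms})$, i.e.\ an additive gap of at least $M$ on top of a baseline of roughly $2M\mu$.

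To convert this additive-$M$ gap into the claimed $n^{1-\epsilon}$ inapproximability factor, I would choose $M$ as a suitable polynomial in the original instance size, say $M = \Theta((\eta+\mu)^c)$ for a large constant $c$ tuned to $\epsilon$, so that the total number of vertices is $n = \Theta(\mu M)$ while the multiplicative ratio between the unsatisfiable and satisfiable optima is bounded below by something of the form $1 + \Omega(1/\mu)$; iterating or tensoring the gadget (or simply taking $c$ large) then pushes the effective ratio past $n^{1-\epsilon}$. Concretely, a polynomial-time $n^{1-\epsilon}$-approximation would let us distinguish the two cases and hence decide \tsat, contradicting $\p\neq\np$. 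The main obstacle I anticipate is the lower-bound direction for the unsatisfiable case: I must rule out any clever clustered spanning tree that routes a clause-cluster through the variable layer or through another clause so as to reach $r_j$ at distance $2$ despite no literal being satisfied. This requires a careful case analysis showing that the consecutiveness (subtree) constraint, together with the fact that the only edges leaving a clause-cluster go to the variable vertices $v_i,\overline v_i$, forbids any shortcut — precisely the place where the clustered constraint does the real work, and where I expect the bulk of the technical effort to reside.
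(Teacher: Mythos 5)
Your construction has the right gadget (a mass of $M$ vertices hanging off a clause root $r_j$), but the gap analysis does not go through, and the fix you propose does not work. In your (implicitly unweighted) layout the satisfiable case already costs $\Theta(M\mu)$ --- every clause cluster, satisfied or not, sits at distance $\Theta(1)$ from $s$, and each of its $M$ pendant vertices pays that distance --- while unsatisfiability adds only about $M$ more. The resulting ratio is $1+\Theta(1/\mu)$ \emph{independently of $M$}: enlarging $M$ (your ``simply taking $c$ large'') inflates the baseline $2M\mu$ and the gap $M$ by the same factor, so the ratio never moves, let alone reaches $n^{1-\epsilon}$. The appeal to ``iterating or tensoring the gadget'' is not a construction; no such amplification of a $1+1/\mu$ gap is available for this broadcast-cost objective, and indeed the paper explicitly leaves open any superconstant approximation lower bound for the unweighted problem \clubfs, which is essentially what your reduction would be establishing.

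The paper closes the gap by exploiting the one resource \cluspt has that \clubfs lacks: edge weights. All edges get weight $0$ except the $\eta$ variable edges $(v_i,\overline{v}_i)$, which get weight $1$. In the satisfiable case the entire tree --- including every $M$-vertex appendage --- is reachable at distance $0$ through satisfied literal vertices, and the only cost is that one endpoint of each variable edge pays $1$, so $\opt=\eta$. In the unsatisfiable case some clause cluster must be entered through a falsified literal vertex at distance $1$, and each of its $M+4$ vertices then pays at least $1$, so $\opt\ge \eta+M+4$. The ratio is thus at least $M/\eta$, a genuinely polynomial factor, and choosing $M=\Theta(\eta^{2/\epsilon})$ yields the claimed $n^{1-\epsilon}$ hardness. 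Without the zero-weight trick (or some other device making the satisfiable optimum asymptotically smaller than the baseline), your argument cannot be completed.
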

\begin{proof}
To prove the statement we use a slight modification of the construction
given in the proof of Theorem~\ref{th:clubfshard}.
The main difference resides in the structure of the graph
\phigraph. In more details, for each clause $c_j$ we do not add
a triangle of vertices clustered into $\cluster{j}$. Instead, we add a subgraph
to \phigraph which is basically made of two components, as follows. First, we
add four vertices, namely $c_{j,1}$, $c_{j,2}$, $c_{j,3}$ and $r_j$, to \vphi
and connect them in order to form a star graph with center $r_j$. Then, we
create a tree of $M$ vertices, where $M$ is a parameter that will be specified
later, which is connected to the above star graph through the center vertex
$r_j$ only. Finally, we cluster the two components together to form
$\cluster{j}$.
All edges have weight equal to zero, except those that connect the two
vertices associated with a variable, which are unit-weighted.
An example of the modified instance is shown in Fig.~\ref{fig:reduction_nonapxspt}, where the triangle with label $M$ represents a generic
tree of $M$ vertices, rooted, for each clause $j$, at vertex $r_j$.
Now, by using an argument similar to that proposed in the proof of
Theorem~\ref{th:clubfshard}, it is easy to see that instance $\langle
G_{\phi},\V,s\rangle$, defined as above, exhibits the following properties:
(i) if $\phi$ is satisfiable then $\opt = \eta$
(ii) if $\phi$ is not satisfiable then $\opt \geq \eta+M+4$, where $\opt$ denotes the
cost of the optimal solution to the \cluspt problem on instance $\langle
\phigraph,\V,s\rangle$ and $M$ can be chosen as an arbitrarily large integer.

We are now ready to prove the claim. Let $\langle G_{\phi},\V,s\rangle$ be an
instance of \cluspt and let \opt be the cost of an optimal solution to such
instance. Suppose by contradiction that there exists a polynomial-time
$n^{1-\epsilon}$-approximation algorithm $A$ for \cluspt for some constant
$\epsilon \in (0,1]$.
Consider a \tsat instance along with the corresponding \cluspt instance.
W.l.o.g., let us assume that $\mu = \Theta(\eta)$. Note that
\np-hard instances of \tsat of this latter kind are known to exist. We then
set $M = \Theta(\eta^{2/\epsilon})$ so
that the number of vertices of graph $G_\phi$ is $n = \Theta(\mu \cdot M) =
\Theta(\eta^{1+2/\epsilon})$.
If the \tsat instance is satisfiable, then $A$ would return a solution $T$ to
the \cluspt instance having a cost of at most:
$
 \cost(T) \le n^{1-\epsilon} \eta =
O(\eta^{\frac{2-2\epsilon}{\epsilon}} \eta) = O(\eta^{\frac{2}{\epsilon}-1})
= O(M \eta^{-1}) = o(M)$,
while if it is not satisfiable $\cost(T) \ge M$. Hence this would
solve \tsat in polynomial time.  \qed
\end{proof}

\subsection{An approximation algorithm}
We now show that the previous inapproximability result for \cluspt is tight by providing a simple approximation algorithm, as stated in the following.

\begin{theorem}
There exists a polynomial-time $n$-approximation algorithm for \cluspt.
\end{theorem}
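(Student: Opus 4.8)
The plan is to exhibit a polynomial-time algorithm that produces a clustered SPT whose cost is at most $n$ times the optimum. The key observation is that the quantity $\sum_{v\in V} d_G(s,v)$, the sum of true shortest-path distances in $G$, is an obvious lower bound on $\cost(\optsol)$: no tree can make any vertex closer to $s$ than its graph distance, so in particular $\cost(\optsol)\ge \max_{v\in V} d_G(s,v) =: D$. The natural strategy is therefore to build \emph{any} feasible clustered SPT whose cost can be bounded from above by $n\cdot D$, since then the ratio is at most $n\cdot D / \cost(\optsol) \le n$.

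First I would construct a feasible solution by mimicking the contraction idea already used in Algorithm~\ref{alg:clustered_bfs_apx}, but on the weighted graph. I would identify each cluster $\cluster{i}$ into a single cluster-vertex to obtain $G'$, compute a shortest-path tree of $G'$ rooted at the cluster-vertex containing $s$, and then for each selected inter-cluster edge attach an internal shortest-path tree of $G[\cluster{i}]$ rooted at the appropriate entry vertex $r_i$. This yields a spanning tree $\tilde T$ in which each cluster induces a connected subtree, hence a feasible clustered SPT. The point is that in $\tilde T$ every vertex $v$ is reached by a path that is a concatenation of at most $n$ pieces, each of which has length at most $D$ (an inter-cluster shortest path segment, or an intra-cluster shortest path), so $d_{\tilde T}(s,v)\le n\cdot D$ for every $v$; summing over the $n$ vertices one would get a crude bound, so some care is needed to land exactly at factor $n$ rather than $n^2$.

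The main obstacle I expect is getting the constant right: a naive bound gives $\cost(\tilde T)\le n\cdot n\cdot D = n^2 D$, which only yields an $n^2$-approximation. To sharpen this to a clean factor of $n$, I would instead bound each $d_{\tilde T}(s,v)$ directly by $n\cdot d_{\tilde T}$-free quantities and compare termwise against a per-vertex lower bound. A cleaner route is to observe that $\cost(\optsol)\ge \sum_{v\in V} d_G(s,v)$ and to engineer the feasible solution so that $d_{\tilde T}(s,v)\le n\cdot d_G(s,v)$ holds for \emph{each} $v$ individually; summing this inequality over all $v$ immediately gives $\cost(\tilde T)\le n\cdot\cost(\optsol)$. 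Establishing the per-vertex inequality $d_{\tilde T}(s,v)\le n\, d_G(s,v)$ is the crux, and it should follow by arguing that the path from $s$ to $v$ in $\tilde T$ traverses at most $n$ clusters and within (and between) them uses genuinely shortest sub-paths, so no single sub-path exceeds $d_G(s,v)$.

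Finally I would note that the whole construction requires only a constant number of shortest-path-tree computations (one in $G'$ and one per cluster), each doable in $O(m+n\log n)$ time via Dijkstra's algorithm, so the algorithm runs in polynomial time, completing the proof that the $n^{1-\epsilon}$ inapproximability of Theorem~\ref{th:clusptnonapx} is essentially tight.
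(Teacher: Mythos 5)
There is a genuine gap, and it is fatal to the chosen yardstick. The per-vertex inequality $d_{\approxsol}(s,v)\le n\, d_G(s,v)$ that you identify as the crux is not satisfiable by \emph{any} feasible solution in general, because the clustering constraint can force detours that are unboundedly longer than graph distances. Concretely, take $s,u,v$ with edges $(s,u)$ and $(s,v)$ of weight $1$ and an edge $(u,v)$ of weight $W\gg n$, with $\{u,v\}$ a single cluster: every feasible clustered tree must contain $(u,v)$ to keep the cluster connected, so one of $u,v$ is at tree distance $1+W$ while $d_G(s,u)=d_G(s,v)=1$, violating your inequality for any fixed $n$. For the same reason the lower bound $\cost(\optsol)\ge\sum_{v\in V}d_G(s,v)$, while valid, can be arbitrarily far from $\opt$ (with the zero weights permitted by the problem it can even be $0$ while $\opt>0$, exactly as in the reduction of Theorem~\ref{th:clusptnonapx}); so no comparison against $\sum_v d_G(s,v)$ or against $\max_v d_G(s,v)$ can certify a factor-$n$ guarantee. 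This is precisely the distinction between the unweighted case (where $d_G(s,v)\ge 1$ makes such bounds useful) and \cluspt.

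The paper's proof uses a different, weight-based yardstick that you are missing. For any spanning tree $T$ rooted at $s$, each edge of $T$ is counted at least once and at most $n$ times in $\sum_{v\in V}d_T(s,v)$, so $w(T)\le\cost(T)\le n\cdot w(T)$, where $w(T)=\sum_{e\in E(T)}w(e)$. Moreover, a \emph{minimum-weight} feasible clustered spanning tree can be computed exactly in polynomial time (the clustered MST is easy: contract each cluster into a vertex, take an MST of the resulting multigraph, and take an MST inside each cluster). Calling the resulting tree \approxsol, feasibility is immediate, $w(\approxsol)\le w(\optsol)$ by optimality of the clustered MST, and therefore $\cost(\approxsol)\le n\,w(\approxsol)\le n\,w(\optsol)\le n\,\cost(\optsol)$. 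Your contraction-plus-shortest-path-trees construction does produce a feasible tree, but without the MST weight argument you have no lower bound on $\opt$ strong enough to land at factor $n$; replacing the shortest-path trees by minimum spanning trees and comparing weights rather than distances is the step your proposal needs.
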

\begin{proof}
The algorithm works as follows: first it computes a multigraph $G'$
from $G$ by identifying each cluster $\cluster{i} \in \V$ into a single vertex
\cluvertex{i}. When doing this, it associates each edge of $G'$ with the
corresponding edge of $G$. Then it computes a \emph{minimum spanning tree}
(MST from now on) $T'$ of $G'$, and $k$ MSTs
$T_1, \dots, T_k$ of $G[\cluster{1}], \dots, G[\cluster{k}]$, respectively.
Finally, the algorithm returns the spanning tree \approxsol of $G$ which contains all
the edges in $\overline{E} \cup \cup_{i=1}^k E(T_i)$, where $\overline{E}$
denotes the set of edges of $G$ associated with the edges of $T'$.

Let us now estimate the quality of \approxsol. Let \optsol be an optimal solution
to the \cluspt instance. For a given spanning
tree $T$ of $G$ rooted at $s$, let $w(T)=\sum_{e \in E(T)} w(e)$. Observe that
clearly $w(T) \leq \cost(T) \le n \cdot w(T)$. Moreover, by construction,
$w(\approxsol) \leq w(\optsol)$. Thus, we have:
$\cost(\approxsol) \le  n \cdot w(\approxsol) \le  n \cdot w(\optsol) \le n
\cdot \cost(\optsol)$.
Since the time complexity is upper bounded by the complexity of computing the MSTs, the claim follows.  \qed
\end{proof}

\subsection{Fixed-Parameter Tractability Results}
The fixed-parameter tractability of \cluspt directly follows from the discussion
of Section~\ref{sec:cluBFS} on the FPT algorithms for \clubfs.
In particular, if we focus on \firstfpt, we observe
that it can be trivially adapted to weighted graphs by considering SPTs instead of BFS trees, thus redefining the base cases $\opt_{v,1}[H]$ and the function $\ell(v, v')$.
The only difference in the analysis is that it is no longer possible to use $n^2$ as an upper bound for the value of $M$. However, by retracing the calculations in the proof of Lemma \ref{lm:clubfs_fpt1_modified}, and by using the fact that $M = O(\opt)$, one can easily prove the following:

\begin{lemma}
\cluspt can be solved in $\softO(nm + 2^k k^3 n^2 \cdot \opt \, \log \opt)$ time. 
\end{lemma}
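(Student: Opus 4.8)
The plan is to adapt algorithm \firstfpt to weighted graphs and then re-derive its running time by retracing the proof of Lemma~\ref{lm:clubfs_fpt1_modified}, the only substantive change being that $\opt$ can no longer be bounded by a polynomial in $n$. Concretely, I would replace every breadth-first search by a shortest-path computation: I redefine the base quantity $\bfs{\cluster{i}}[v]$ to be the cost $\sum_{u\in\cluster{i}} d_{G[\cluster{i}]}(v,u)$ of an SPT of $G[\cluster{i}]$ rooted at $v$, and I redefine $\ell(v,v')=\min_{(x,v')\in E,\, x\in R}\{d_{G[R]}(v,x)+w(x,v')\}$, where $R$ is the cluster of $v$, so that $\ell(v,v')$ is the weight of a shortest $v$-to-$v'$ path whose internal vertices all lie in $R$. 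With these two substitutions the definition of $\opt_{v,i}[\varset]$, the recurrence~\eqref{eq:fpt1}, its reformulation~\eqref{eq:fpt1_convolution} as a subset convolution of $f_{v,i}$ and $g_{v,i}$, and the doubling scheme on $M$ all carry over verbatim. The correctness argument is unchanged, as it relies only on two structural facts that remain valid when hop-counts are replaced by weighted distances: that an optimal clustered solution restricted to each cluster is an SPT of the induced subgraph rooted at its cluster root, and that a source-to-cluster shortest path decomposes into a cross-cluster portion (charged through $\ell$) plus an intra-cluster SPT.

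For the running time, I would first observe that computing $\bfs{\cluster{i}}[v]$ for every cluster and every one of its vertices amounts to running Dijkstra from each $v\in\cluster{i}$ inside $G[\cluster{i}]$; summing $|\cluster{i}|\cdot\softO(|E(G[\cluster{i}])|+|\cluster{i}|)$ over all clusters and using $\sum_i|\cluster{i}|=n$ gives $\softO(nm)$ (which also covers all the values $\ell(v,v')$), accounting for the first additive term of the claimed bound. The base-case entries $\opt_{v,1}[\varset]$ are then filled via~\eqref{eq:fpt1_modified_base} by table lookups in $O(n^2 2^k)$ time per round, which is dominated by the recursive part. For the latter, I fix a value of $M$: the functions $f_{v,i},g_{v,i}$ take values in $\{0,\dots,2M\}$, so each of the $n$ convolutions of~\eqref{eq:fpt1_convolution} is computed by Theorem~\ref{th:bjork} in $\mathtt{conv}(2M,U)$ time, and a single round costs $n\cdot\mathtt{conv}(2M,U)=O\big(M\cdot 2^k k^3 n^2\cdot\polylog(M,n)\big)$.

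This is exactly the per-round expression analyzed in Lemma~\ref{lm:clubfs_fpt1_modified}; the sole difference is that there $M=O(n^2)$, so the factor $\polylog(M,n)=\polylog(n)$ was swallowed by $\softO$, whereas here $M=O(\opt)$ is not polynomial in $n$ and this factor contributes an explicit $\log\opt$ that survives $\softO$. Summing over the $1+\lceil\log\opt\rceil$ doubling rounds, the geometric series $\sum_{M=1,2,\dots,2\opt}M=\Theta(\opt)$ is dominated by its last term, so the recursive part totals $\softO(2^k k^3 n^2\cdot\opt\log\opt)$; adding the $\softO(nm)$ base-case cost yields the stated bound. The main (and essentially the only) delicate point is the bookkeeping of this $\log\opt$ factor: one must pinpoint that the sole $\softO$-absorption used in the unweighted analysis that is no longer legitimate is the one hiding the convolution's dependence on the magnitude $M$ of its inputs, while verifying that the number of rounds, being $O(\log\opt)$, is already subsumed by the geometric series rather than introducing a second logarithmic factor.
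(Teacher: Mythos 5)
Your proposal is correct and follows essentially the same route as the paper: the paper likewise obtains this lemma by adapting \firstfpt, replacing BFS trees with SPTs (i.e., redefining the base cases $\opt_{v,1}[\varset]$ and the function $\ell(v,v')$), and retracing the calculations of Lemma~\ref{lm:clubfs_fpt1_modified} with the bound $M=O(\opt)$ in place of $M=O(n^2)$. Your write-up is in fact more explicit than the paper's about where the $\log\opt$ factor enters (the convolution's dependence on the magnitude of its inputs) and why the doubling rounds do not contribute a second logarithm.
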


Regarding \secondfpt, it can also be easily adapted to solve \cluspt by using Dijkstra's algorithm
instead of the BFS algorithm, when the solution to a sub-problem defined within each cluster has to be computed. This only slightly increases
the resulting time complexity, which is however in the order of a logarithmic factor, as stated in the following.

\begin{theorem}
\cluspt can be solved in $O\big(\param^\frac{\param}{2}  (m + n \log n)\big)$
time.
\end{theorem}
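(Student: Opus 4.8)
The plan is to adapt the argument behind Lemma~\ref{lm:clubfs_fpt2} almost verbatim, replacing every unweighted shortest-path computation (a BFS) with its weighted counterpart (a run of Dijkstra's algorithm), and then to pinpoint where the extra logarithmic factor enters. First I would fix a guessed vector $\langle v_1,\dots,v_k\rangle$ of cluster roots, with $v_i\in\cluster{i}$, and rewrite the cost of any solution $T$ having these roots exactly as in the \clubfs analysis:
\[
\cost(T)=\sum_{\cluster{i}\in\V}\Big(|\cluster{i}|\,d_T(s,v_i)+\sum_{v\in\cluster{i}}d_T(v_i,v)\Big).
\]
This decomposition is purely combinatorial and does not depend on the weights, so it carries over unchanged.

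The second key step is to argue that the two summations are independently minimizable, just as before. Since $d_T(v_i,v)\ge d_{G[\cluster{i}]}(v_i,v)$ for every $v\in\cluster{i}$, the inner sum is minimized precisely when $T[\cluster{i}]$ is a \emph{shortest-path tree} of $G[\cluster{i}]$ rooted at $v_i$, rather than a BFS tree; this is the only place where the weights matter. For the outer summation I would reuse the very same auxiliary-graph construction, now preserving edge weights: remove edges incident to no root, orient the surviving edges towards the roots they touch, and replace the edges inside each $G[\cluster{i}]$ by those of an SPT of $G[\cluster{i}]$ rooted at $v_i$. A shortest-path tree of the resulting weighted graph $G'$ then realizes, for every $i$, the minimum over the admissible paths $P_i$ while simultaneously containing all intra-cluster SPTs, so its (undirected) edge set is optimal among all solutions consistent with the guessed roots.

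Finally I would account for the running time. The number of root guesses is $\prod_{\cluster{i}\in\V}|\cluster{i}|\le\param^{\frac{\param}{2}}$, by the same counting as in Lemma~\ref{lm:clubfs_fpt2}. For each guess, computing the SPTs of all clusters together with one SPT of $G'$ amounts to a constant number of Dijkstra invocations; the per-cluster runs telescope to $\sum_{\cluster{i}\in\V}O(|E(G[\cluster{i}])|+|\cluster{i}|\log|\cluster{i}|)=O(m+n\log n)$, and the final run on $G'$ is likewise $O(m+n\log n)$. Multiplying by the number of guesses yields the claimed $O(\param^{\frac{\param}{2}}(m+n\log n))$ bound.

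I expect essentially no genuine obstacle: the structural lemma transfers mechanically because the root decomposition is weight-agnostic, and the only substantive change is swapping BFS for Dijkstra, whose $O(m+n\log n)$ cost is exactly the source of the additional logarithmic factor. The one point deserving a line of care is checking that orienting the cluster SPTs away from their roots does not destroy any route in $P_i$ — i.e.\ that every feasible root-to-root path still survives in $G'$ — which follows from the same observation used in the unweighted setting, namely that such paths enter a cluster only through its root.
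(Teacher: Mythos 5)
Your proposal is correct and follows essentially the same route as the paper: it reuses the \secondfpt root-guessing scheme and cost decomposition from Lemma~\ref{lm:clubfs_fpt2} verbatim, replaces each BFS with a Dijkstra run, and observes that the per-guess cost telescopes to $O(m+n\log n)$ while the number of guesses remains bounded by $\param^{\frac{\param}{2}}$. The paper's own proof is just a terser version of this same argument.
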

\begin{proof}
We prove the claim by elaborating on the proofs of Lemma~\ref{lm:clubfs_fpt2}.
In particular, it suffices to note that in Algorithm~\secondfpt, for each vector of cluster root vertices, we need to compute the
SPT trees (instead of BFS trees) of $G[\cluster{i}]$ for $i=1,\dots, k$ plus an additional SPT tree of
$G'$.
This can be done in  $O(m) + \sum_{i=1}^k O(|E(G[\cluster{i}])|+|V(G[\cluster{i}] \log V(G[\cluster{i}])| + |\cluster{i}| ) = O(m + n \log n)$ time. 
\qed
\end{proof}

To summarize, we can give the following theorem.

\begin{theorem}
\cluspt can be solved in \totalfptspt time.
\end{theorem}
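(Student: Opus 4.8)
The plan is to combine the two weighted fixed-parameter algorithms developed in this subsection, exactly mirroring the combination theorem previously established for \clubfs. The first is the adaptation of \firstfpt to the weighted setting, which by the preceding lemma runs in $\softO(nm + 2^k k^3 n^2 \cdot \opt \log \opt)$ time; the second is the adaptation of \secondfpt, which by the preceding theorem runs in $O\big(\param^{\param/2}(m + n\log n)\big)$ time. Since these two complexities are governed by different parameters ($k$ and the number of vertices in nontrivial clusters, respectively), neither dominates the other across all instances, and the goal is to obtain a bound that is the \emph{minimum} of the two.

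Given an instance of \cluspt, I would simply run whichever of the two algorithms is faster on that instance (equivalently, run both and halt as soon as one of them terminates), so that the overall running time is bounded by the minimum of the two quantities above. The only remaining point is to reconcile the second bound with the stated expression $\softO(\param^{\param/2} m)$: since $G$ is connected we have $m \ge n-1$, whence $n\log n = \softO(m)$ and therefore $m + n\log n = \softO(m)$. Consequently $O\big(\param^{\param/2}(m + n\log n)\big) = \softO\big(\param^{\param/2} m\big)$, and the overall running time is $\softO\big(\min\{nm + 2^k k^3 n^2 \cdot \opt \log \opt,\ \param^{\param/2} m\}\big)$, as claimed.

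There is no genuine obstacle here, as the theorem is a direct corollary of the two preceding results: the parameter-driven choice between the two algorithms is precisely the same device already used in the unweighted case. The only subtlety worth flagging is the absorption of the $n\log n$ additive term into $m$ inside the $\softO$ notation, which is harmless because every connected graph has $m = \Omega(n)$.
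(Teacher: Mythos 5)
Your proposal is correct and matches the paper's intent exactly: the theorem is stated there as a direct summary of the two preceding results (the weighted adaptations of \firstfpt and \secondfpt), with the minimum realized by running both algorithms and taking whichever finishes first. Your observation that $m + n\log n = \softO(m)$ for a connected graph is the right way to reconcile the second bound with the stated expression.
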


\section{\clusp} \label{sec:cluSP}
To complement our results, we also studied \clusp, i.e., the problem of
computing a \emph{clustered shortest path} between two given vertices of a
graph. The problem was introduced in~\cite{LW16}, and asks for finding a
minimum-cost path, in a clustered \emph{weighted} graph $G$, between a source
and a destination vertex, with the constraint that in a feasible path, vertices
belonging to a same cluster must induce a (connected) subpath.
In this section, we extend the results of~\cite{LW16} by considering the
unweighted version of the problem, which to the best of our knowledge was never
considered before this work.
We are then able to give the following result.

\begin{theorem}\label{thm:clusp_non_apx}
Unweighted \clusp cannot be approximated, in polynomial time, within a factor
of $n^{1-\epsilon}$ for any constant $\epsilon \in (0, 1]$, unless $\p=\np$.
\end{theorem}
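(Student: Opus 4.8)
The plan is to adapt the reduction used in Theorem~\ref{th:clusptnonapx}, but I must be careful: \clusp differs fundamentally from \cluspt in that a feasible solution need not visit \emph{all} clusters or \emph{all} vertices of a visited cluster. This means the ``gadget trees of $M$ vertices'' trick cannot be reused verbatim, since in a path problem one is free to simply avoid expensive subtrees. Instead, I would start again from a \tsat instance $\phi$ with $\eta$ variables and $\mu$ clauses, and build an unweighted graph $G_\phi$ together with a source $s$ and a destination $t$, so that any feasible clustered $s$--$t$ path is \emph{forced} to thread through a variable-selection region and then a clause-verification region, with the consecutiveness constraint on clusters encoding the truth assignment exactly as in Theorem~\ref{th:clubfshard}.

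The key steps, in order, would be as follows. First, I would design variable gadgets: for each $x_i$ a small cluster that the path must enter and leave, where the two possible ways of traversing the cluster (consistent with the consecutiveness constraint) correspond to setting $x_i$ true or false. Since the path cannot revisit a cluster once it has left it, committing to one traversal permanently fixes the literal's value along the whole path. Second, I would design clause gadgets arranged in series between $s$ and $t$, so that the path is obligated to pass through every clause gadget; within each clause gadget, a short ``satisfied'' route is available only if the path's earlier variable choices made at least one literal of that clause true, while the only alternative is a long detour of length $\Theta(M)$ (built from unit-weight edges, since the graph is unweighted now). Third, I would verify the two-sided gap: if $\phi$ is satisfiable there is a feasible path of length $O(\eta+\mu)$ using only short routes, whereas if $\phi$ is unsatisfiable every feasible path is forced into at least one long detour and so has length $\ge M$.

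Finally, I would set the parameters to amplify the gap into an $n^{1-\epsilon}$ inapproximability bound exactly as in the proof of Theorem~\ref{th:clusptnonapx}: assuming w.l.o.g.\ $\mu=\Theta(\eta)$, choose $M=\Theta(\eta^{2/\epsilon})$ so that $n=\Theta(\eta^{1+2/\epsilon})$, giving a satisfiable cost of $o(M)$ against an unsatisfiable cost of $\ge M$, which a hypothetical $n^{1-\epsilon}$-approximation would have to distinguish, thereby deciding \tsat in polynomial time. I expect the main obstacle to be the gadget design in the first two steps: unlike in the tree problems, where connectivity of each cluster globally forces structure, here I must ensure that the \emph{path's} freedom to skip clusters and vertices cannot be abused to evade the long detours, so the gadgets must make passing through each clause mandatory while still letting the consecutiveness constraint faithfully propagate a single global truth assignment. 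This is precisely where the weighted construction of~\cite{LW16} relied on edge weights, and replacing those weights with unit-length paths of $\Theta(M)$ edges without destroying feasibility is the delicate part.
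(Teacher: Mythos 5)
There is a genuine gap here: your write-up is a plan whose central component is explicitly deferred, and that component is exactly where the difficulty lives. You correctly identify the gap-amplification mechanism (two-sided gap $O(\eta+\mu)$ versus $\ge M$, then choose $M$ polynomially large so that $n^{1-\epsilon}\cdot O(\eta+\mu)=o(M)$), and you correctly diagnose that the danger in a path problem is that clusters and vertices may be skipped. But the variable/clause gadgets are never constructed, and the \tsat-based design you sketch runs into a structural obstruction that is not just ``delicate'' but, in the form described, fatal. The consecutiveness constraint of \clusp means a cluster may be visited in at most one contiguous stretch of the path. Your plan has the path commit to a truth value by traversing the variable cluster once in a ``selection region,'' and then has each clause gadget offer a short route only when a literal is true. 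The only tool \clusp gives you to make a route conditionally unavailable is ``this route re-enters a cluster that was already visited and left.'' That mechanism can forbid a route, but it cannot \emph{enable} a route for a true literal in several clauses: if $x_i$ occurs positively in two non-adjacent clause gadgets and both short routes must touch the cluster of $x_i$, the path cannot use both, since that would require two disjoint visits to the same cluster. So a single variable cluster cannot certify its literal across multiple clause occurrences, and the reduction does not faithfully propagate a global assignment.

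The paper sidesteps precisely this issue by reducing from \xc rather than \tsat. There, the semantic constraint to be enforced is that the selected sets are pairwise disjoint and cover every item exactly once, which is exactly what ``each item cluster is entered at most once'' gives for free: the top-path for a set $S_j$ threads through the clusters of the three items of $S_j$, so two selected sets sharing an item are automatically incompatible. Exactness of the cover is forced by a counting gadget: only $\mu-\eta$ auxiliary ``bottom-path'' clusters are available for skipping sets, so exactly $\eta$ top-paths must be used. The length-$M$ unit-weight paths are placed \emph{inside} the item and auxiliary clusters, dangling off the vertices the path must touch, so that any feasible path avoiding them has cost $O(\mu)$ while any path that is forced to absorb one costs at least $M$. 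If you want to salvage a proof along your lines, the cleanest fix is to switch your source problem to \xc (or to a SAT variant in which each literal occurrence can be matched to a distinct resource, which is essentially re-deriving the \xc reduction); as stated, your clause-gadget step cannot be completed.
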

\begin{proof}
	To prove the statement we show a polynomial-time reduction from the \np-complete problem \xc
	(X3C) to \clusp. In the X3C problem we are given a set
	$\I=\{x_1, \dots, x_{3\eta}\}$ of $3\eta$ items, and a collection $\S = {S_1, \dots,
		S_\mu}$ of $\mu \ge \eta/3$ subsets of $\I$, each containing exactly $3$
	items. The problem consists of determining whether there exists a collection $\S^*
	\subset S$ such that $|\S^*|=\eta$ and $\cup_{S \in \S^*}=\I$ (i.e., each
	element of $\I$ is contained in exactly one set of $\S^*$). For the sake of
	simplicity we assume that each $x_i \in \I$ is contained in at most $3$
	sets.\footnote{The X3C problem remains NP-complete even with this additional
		assumption, see e.g., problem SP2 in \cite{GJ79}.}	
	
	Let $M$ be an integer parameter that will be specified later. Given an instance
	$\langle \I, \S \rangle$ of X3C, the corresponding instance of \clusp is
	constructed as follows:
	\begin{itemize}
		\item For each set $S_j \in \S$ we add four vertices $u_j^0$, $u_j^1$, $u_j^2$, and $u_j^3$.
		\item For $j=1, \dots, \mu-1$ we add the edge $(u_j^3, u_{j+1}^0)$.
		\item For each $x_i \in \I$, let $\ell_i$ be the number of sets that contain $x_i$.
		First we add $\ell_i$ vertices $v_i^z$ for $z=1, \dots, \ell_i$. Then,
		we add a vertex $v_i$ and
		we connect $v_i$ to each $v_i^z$ using a path of length $M$.
		The vertex $v_i$ along with all the vertices in the paths from $v_i$ to each of $v_i^1, \dots, v_i^{\ell_i}$ form a cluster.
		\item  For each $S_j \in \S$ and $x_i \in S_j$, if $x_i$ is the $k$-th item in $S_j$, and $S_j$
		is the $h$-th set to contain $x_i$, we add the edges $(u_j^{k-1}, v_i^h)$ and $(u_j^{k}, v_i^h)$.
		For a given $S_j \in \S$, we call the set of the edges of the form $(u_j^0, v_i^h)$ the
		\emph{top-path for $S_j$}.
		\item For each $z=1,\dots,\mu-\eta$, we add $\mu$ vertices $y_z^1, \dots, y_z^\mu$ and an
		additional vertex $y_z$ connected to each $y_z^1, \dots, y_z^\mu$ with a path of length $M$.
		The vertex $y_z$ along with all the vertices in the paths from $y_z$ to each of $y_z^1, \dots, y_z^\mu$ form a cluster.
		\item For each set $S_j \in \S$ we add the $2(\mu-\eta)$ edges $\{ (u_j^0,
		y_z^j) \, : \, z=1,\dots,\mu-\eta \} \cup \{ (u_j^3, y_z^j) \, : \,
		z=1,\dots,\mu-\eta \}$. For a given $S_j \in \S$, we call the set of edges $\{
		(u_j^0, y_z^j), (u_j^3, y_z^j) \}$ the \emph{$z$-th bottom-path for $S_j$}.
	\end{itemize}
	All vertices that have not explicitly been already assigned to a cluster
	belong to singleton clusters. Moreover we let $s=u_0^0$ and $t=u_\mu^3$.
	An example of the above construction is shown in Fig.~\ref{fig:clusp_non_apx}.
	
	\begin{figure*}[ht]
		\centering
		\includegraphics[scale=1]{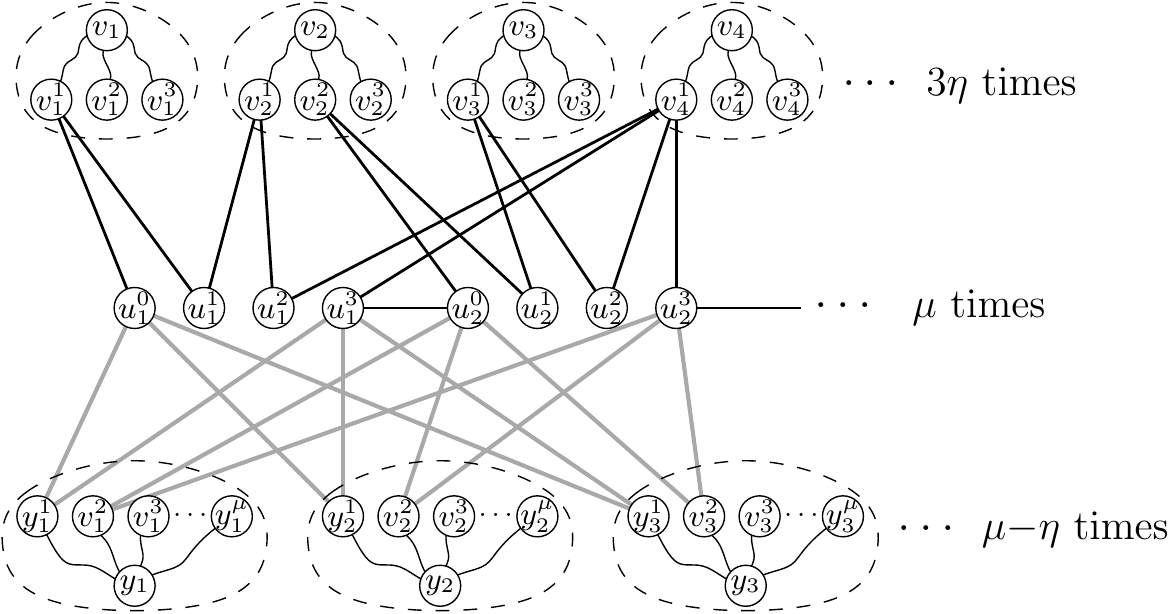}
		\caption{The graph used in the proof of Theorem~\ref{thm:clusp_non_apx}.
			Top-paths are shown with bold black edges. Bottom paths are shown with bold-gray
			edges. Paths of length $M$ are shown with curvy lines. Clusters are shown with
			dashed lines. In the corresponding X3C instance we have $S_1=\{x_1, x_2, x_4\}$
			and $S_2 = \{x_2, x_3, x_4\}$.}
		\label{fig:clusp_non_apx}
	\end{figure*}
	Now, let \opt be the cost of an optimal solution to this \clusp instance. We
	now claim that (i) if there is a solution for the X3C instance then $\opt \le 15\mu$,
	and (ii) if there is no solution to the X3C instance then $\opt \ge M$. To prove
	(i), let $\S^*$ be a solution to the X3C instance. Notice that $|\S^*|=\eta$ and
	that $|\S \setminus \S^*|=\mu-\eta$. We construct a clustered $s$-$t$-path
	$P$ as follows: for each $S_j \in \S$, if $S_j \in \S^*$ we add to $P$ all the
	edges in the top-path for $S_j$, while if $S_j \not\in \S^*$ we let $z=| \{S_1,
	\dots, S_j\} \setminus \S^*|$ and we add to $P$ all the edges in the $z$-th
	bottom-path for $S_j$. Finally, we add to $P$ all the edges in $\{ (u_j^3,
	u_{j+1}^0) \, : \, j=1, \dots, \mu-1 \}$. It is easy to see that $P$ is indeed an
	$s$-$t$-path and that each cluster is traversed only once. Moreover, $P$
	contains exactly $\eta$ top-paths (of $6$ edges each) and $\mu-\eta$ bottom
	paths (of $2$ edges each). Therefore, it follows that:
	$\opt \le 6\eta + 2(\mu-\eta) +  \mu-1 \le 4\eta + 3\mu \le 15\mu.$
	
	To prove (ii) we consider the contrapositive statement, i.e., we show that if
	$\opt < M$ then there exists a solution to the X3C instance. Let $P^*$ be an
	optimal solution to the \clusp instance and suppose $\opt < M$. This immediately
	implies that $P$ does not contain any of the paths from $y_z$ to $y_z^j$
	or any of those from $v_i$ to $v_i^h$,
	since all these paths have length $M$.
	This means that, for
	each $S_j \in \S$, $P$ contains either the (unique) top-path for $S_j$ or one of
	the bottom paths for $S_j$. Since $P$ can contain at most $\mu-\eta$
	bottom-paths and at most $\eta$ top paths (as otherwise it would violate the
	clustering constraints), it follows that $P$ contains \emph{exactly} $\eta$ top
	paths. We define $\S^*$ as the collection of the sets $S_j$ for which a top-path
	has been selected. Since $|\S^*|=\mu$ and two paths corresponding to two
	different sets in $\S^*$ cannot both pass through vertices belonging to the same
	cluster, it follows that $\S^*$ is indeed a solution to the X3C instance.
	
	We are now ready to prove the claim. Notice that the number of vertices
	of the \clusp instance, say $n$, is upper bounded by $O(\mu^2 M)$. We set
	$M=\Theta(\mu^{\frac{3}{\epsilon}-1})$ so that
	$n=O(\mu^{\frac{3}{\epsilon}+1})$. Suppose now that there exists a
	polynomial-time {$n^{1-\epsilon}$--approximation} algorithm $A$ for \clusp.
	This would imply that if the X3C instance admits a solution, then the cost of
	the solution returned by $A$ would be at most:
	\begin{align*}
	15 \mu n^{1-\epsilon} & = O(\mu
	\mu^{\frac{3}{\epsilon}-2-\epsilon}) = O(\mu^{\frac{3}{\epsilon}-1-\epsilon} )
	= O(M \mu^{-\epsilon})
	= O(M M^{-\frac{\epsilon^2}{3-\epsilon}}) = o(M)
	\end{align*}
	while, if the X3C instance does not admit a
	solution, then $A$ would return a solution to the \clusp instance having a cost
	of at least $M$. It follows that we would be able to solve X3C in polynomial
	time. \qed
\end{proof}
\section{Conclusion and Future work} \label{sec:concl}
In this paper, motivated by key modern networked applications, we have studied several clustered variants of shortest-path related problems, namely \clubfs, \cluspt and unweighted \clusp.  We have provided a comprehensive set of results which allow to shed light on the  complexity of such problems.

There are several directions that may be pursued for future work.
The main research question that we leave open is that of establishing a lower bound on the approximability of \clubfs, and, in case of a gap w.r.t. the approximation factor  provided by Algorithm~\ref{alg:clustered_bfs_apx}, that of devising a better approximation algorithm (by, e.g., exploring some other natural heuristic).
Besides that, also studying clustered shortest-path problems on restricted but meaningful classes of graphs, like, e.g., euclidean or planar graphs, might deserve investigation.
Another interesting issue is surely that of studying how other practically relevant network structures, such as spanners and highly-connected spanning subgraphs, behave in a clustered setting (incidentally, clusterization is one of the most used techniques to build this kind of structures, see e.g. \cite{BiloGG0P15}).
Finally, it would be also interesting to conduct an experimental study for assessing the practical performance of all proposed algorithms.


\bibliographystyle{abbrvurl}
\bibliography{biblio}


\end{document}